\documentclass[preprint]{elsarticle}
\journal{Artificial Intelligence}
\usepackage{times}
\usepackage{helvet}
\usepackage{courier}

\usepackage{amsthm,amsmath,amssymb,bbm,graphicx,rotating}
\usepackage{boxedminipage}
\usepackage{enumerate}
\usepackage{epic,eepic}
\usepackage{tikz}
\usetikzlibrary{shapes}
\usepackage[rightcaption]{sidecap}    
\usepackage{booktabs}
\usepackage{graphicx}
\usepackage{paralist}

\usepackage{floatrow}

\tikzstyle{every picture} = [>=latex]
\sloppy
\pagestyle{plain}

\usepackage[author=,draft]{fixme}
\fxsetup{theme=color}

\usepackage{xspace}
\newcommand{\bigoh}{\mathcal{O}}

\renewcommand{\phi}{\varphi}

\makeatletter
\newcommand*{\ol}[1]{$\overline{\hbox{#1}}\m@th$}

\makeatother

\newcommand{\Nat}{{\mathbb{N}}}

\newcommand{\cF}{{\cal F}}

\newcommand{\var}{\text{\normalfont var}}

\newcommand{\SB}{\{\,}%
\newcommand{\SM}{\;{:}\;}%
\newcommand{\SE}{\,\}}%

\newcommand{\eval}{\eta}
\newcommand{\td}{td}
\newcommand{\tw}{tw}

\theoremstyle{plain}
\newtheorem{THE}{Theorem}
\newtheorem{PRO}[THE]{Proposition}
\newtheorem{LEM}[THE]{Lemma}
\newtheorem{COR}[THE]{Corollary}
\newtheorem*{THE*}{Theorem}
\newtheorem*{fact*}{Fact}
\theoremstyle{definition}
\newtheorem{DEF}[THE]{Definition}
\newtheorem*{DEF*}{Definition}
\theoremstyle{remark}
\newtheorem{OBS}{Observation}
\newtheorem{CLM}{Claim}

\newcommand{\pbDef}[3]{%
\noindent
\begin{center}
\begin{boxedminipage}{0.98 \columnwidth}
#1\\[5pt]
\begin{tabular}{l p{0.70 \columnwidth}}
Input: & #2\\
Question: & #3
\end{tabular}
\end{boxedminipage}
\end{center}
}

\newcommand{\cc}[1]{{\mbox{\textnormal{\textsf{#1}}}}\xspace}  

\newcommand{\integers}{\mathbb{Z}}
\renewcommand{\P}{\cc{P}}
\newcommand{\NP}{\cc{NP}}

\newcommand{\FPT}{\cc{FPT}}

\newcommand{\paraNP}{\cc{paraNP}}


\newcommand{\hy}{\hbox{-}\nobreak\hskip0pt}

\newcommand{\nn}{\mathbb{N}}

\newcommand{\bigO}[1]{\ensuremath{{\mathcal O}(#1)}}

\newcommand{\probfont}[1]{\textnormal{\textsc{#1}}}

\newcommand{\clos}{\ensuremath{\text{clos}}}
\begin{document}
\begin{frontmatter}
  \title{The Complexity Landscape of Decompositional Parameters for ILP}

  \cortext[cor1]{Corresponding authors}

  \author[ac]{Robert Ganian\corref{cor1}}
  \ead{rganian@gmail.com}
  \author[ac]{Sebastian Ordyniak}
  \ead{sordyniak@gmail.com}

  \address[ac]{Algorithms and Complexity Group, TU Wien, Favoritenstrasse 9-11, 1040 Wien, Austria}

\begin{abstract}
  Integer Linear Programming (ILP) can be seen as the
  archetypical problem for \NP-complete optimization problems, and a wide range of 
  problems in artificial intelligence are solved
  in practice via a translation to ILP. Despite its huge range of applications, only
  few tractable fragments of ILP are known, probably the most
  prominent of which is based on the notion of total
  unimodularity. Using entirely different techniques, we identify new
  tractable fragments of ILP by studying structural parameterizations
  of the constraint matrix within the
  framework of parameterized complexity. 

  In particular, we show
  that ILP is fixed-parameter tractable when parameterized by the treedepth
  of the constraint matrix and the maximum
  absolute value of any coefficient occurring in the ILP instance. Together
  with matching hardness results for the more general parameter treewidth, we
  give an overview of the complexity of ILP w.r.t.\ decompositional
  parameters defined on the constraint matrix.
\end{abstract}

  \begin{keyword}
    Integer Linear Programming, treewidth, treedepth, (Parameterized) complexity
  \end{keyword}
  
\end{frontmatter}

\section{Introduction}
\label{sec:intro}

Integer Linear Programming (ILP) is among the most successful and general
paradigms for solving computationally intractable optimization
problems in computer science. In
particular, a wide variety of problems in artificial intelligence are
efficiently solved in practice via a translation into an Integer
Linear Program, including problems from areas such as process scheduling~\cite{FloudasLin05}, planning~\cite{BrielVossenKambhampati05,VossenBallLotemNau99}, vehicle routing~\cite{Toth01}, packing~\cite{LodiMM02}, and network hub location~\cite{AlumurK08}. In its most
general form ILP can be formalized as follows:

\pbDef{\probfont{Integer Linear Program}}
{A matrix $A \in \integers^{m\times n}$ and two vectors $b \in
  \integers^m$ and $s \in \integers^n$.}{Maximize $s^Tx$ for every $x
  \in \integers^n$ with $Ax \leq b$.}

Closely related to ILP is the \probfont{ILP-feasibility} problem, where 
given $A$ and $b$ as above, the problem is to decide
whether there is an $x\in \integers^n$ such that $Ax\leq b$.
The decision version of ILP, \probfont{ILP-feasibility} and various other highly restricted
variants are well-known to be \NP-complete~\cite{Papadimitriou81}.

Despite the importance of the problem, an understanding of the influence of structural
restrictions on the complexity of ILP is still in its infancy. This is
in stark contrast to another well-known and general paradigm for the
solution of problems in Computer Science, the Satisfiability problem (SAT).
There, the parameterized complexity framework~\cite{book/DowneyF13} has yielded deep results capturing the tractability and intractability of SAT with respect to a plethora of structural restrictions. In the context of SAT, one often considers structural restrictions on a
graphical representation of the formula (such as the primal
graph), and the aim is to design efficient 
fixed-parameter algorithms for SAT, i.e., algorithms running in time
$\bigO{f(k)n^{\bigO{1}}}$ where $k$ is the value of the considered
structural parameter for the given SAT instance and $n$ is its input size. 
It is known that SAT is fixed-parameter tractable
w.r.t.\ a variety of structural parameters, including the prominent parameters
treewidth~\cite{Szeider03} but also more specialized parameters~\cite{FischerMakowskyRavve06,GanianSzeider15,GanianHlinenyObdrzalek10}.

\paragraph{Our contribution} In this work, we carry out a similar line of research for ILP by studying
the parameterized complexity of ILP w.r.t.\ various structural
parameterizations. In particular, we
consider parameterizations of the \emph{primal graph} of the
ILP instance, i.e., the undirected graph whose
vertex set is the set of variables of the ILP instance and whose edges
represent the occurrence of two variables in a common expression.
We obtain a complete picture of the parameterized complexity of ILP 
w.r.t.\ well-known decompositional parameters of the primal graph,
specifically treedepth, treewidth, and cliquewidth; our results are summarized in Table~\ref{tab:landscape}.

\begin{table}[b]
\begin{center}
\begin{tabular}{l|l|l}
~& $\ell$ &without $\ell$\\ \hline
TD & \FPT{} (Thm~\ref{thm:main}) & \paraNP{}\hy h (Thm~\ref{the:hard-treedepth})\\ 
TW/CW & \paraNP{}\hy h (Thm~\ref{the:para-hard-tw-coeff})
& \paraNP{}\hy h (Thm~\ref{the:para-hard-tw-coeff})\\ 
None & \paraNP{}\hy h (Obs~\ref{obs:hard-coef}) & n.a. \\ 
\end{tabular}
\end{center}
\vspace{-0.3cm}
\caption{The complexity landscape of ILP obtained in this paper. 
  The table shows the parameterized complexity of ILP parameterized by
  the treedepth (TD), treewidth (TW), or cliquewidth (CW) of the
  primal graph with (second column ``$\ell$'') and without (third
  column ``without $\ell$'') the additional parameterization by the
  maximum absolute value $\ell$ of any coefficient in $A$ or $b$.
}
\label{tab:landscape}
\end{table}
Our main algorithmic result (Theorem~\ref{thm:main}) shows that ILP is fixed-parameter tractable
parameterized by the \emph{treedepth} of the primal graph and the
maximum absolute value $\ell$ of any coefficient occurring in $A$ or
$b$. 
Together with the classical results for \emph{totally unimodular}
matrices~\cite[Section 13.2.]{PapadimitriouSteiglitz82} and fixed
number of variables~\cite{Lenstra83}, which use entirely
different techniques, our result is one of the few
tractability results for ILP without additional restrictions. We note that the presented algorithm is primarily of theoretical interest; the intent here is to classify the complexity of ILP by providing runtime guarantees, not to compete with state-of-the-art ILP solvers. 

We complement our algorithmic results with matching lower bounds, provided in terms
of \paraNP{}\hy hardness results~(see the Preliminaries); an overview of the obtained results is provided in Table~\ref{tab:landscape}. 
Namely, we show that already \probfont{ILP-feasibility} is unlikely to
be fixed-parameter tractable when parameterized by treedepth (whereas the case of parameterizing by only $\ell$ is known to be hard); in fact, our results also exclude algorithms running in time $(n+m)^{f(k)}$, where $k$ is the parameter.
Moreover, the hardness results provided here also hold in the strong sense, i.e., even for
ILP instances whose size is bounded by a polynomial of $n$ and $m$; it is worth noting that this requires a more careful approach than what would suffice for weak \paraNP{}\hy hardness.

One might be tempted to think
that, as is the case for SAT and numerous other problems, the fixed-parameter tractability
result for treedepth carries over to the more general structural
parameter treewidth. We show that this is not the case for ILP. Along with recent results for the Mixed Chinese Postman Problem~\cite{GutinJonesW15}, this is only the second known case of a natural problem where using treedepth instead of treewidth actually ``helps'' in terms of fixed parameter tractability.
In fact, we show that already \probfont{ILP-feasibility}
remains \NP{}\hy hard for ILP instances of treewidth at most two
and whose maximum coefficient is at most one. Observe that this also implies
the same intractability results for the more general parameter clique-width~\cite{CourcelleMakowskyRotics00}.

\paragraph{Related Work}

We are not the first to consider decompositional parameterizations of
the primal graph for ILP. However, previous results in this area
required either implicit or explicit bounds on the domain values of
variables together with further restrictions on the coefficients.
In particular, for the case of non-negative ILP instances, i.e.,
ILP instances where all coefficients as well as all variable domains
are assumed to be non-negative, ILP is known to be fixed-parameter tractable
parameterized by the branchwidth, a decompositional parameter closely
related to treewidth, of the primal graph and the maximum value $B$ of any
coefficient in the constraint vector $b$~\cite{CunninghamGeelen07}. Note that $B$ also bounds
the maximum domain value of any variable in the case of non-negative
ILP instances. A more recent result by Jansen and Kratsch~\cite{JansenKratsch15esa} showed
that ILP is fixed-parameter tractable parameterized by the treewidth
of the primal graph and the maximum absolute domain value of any
variable. Hence in both cases the maximum absolute
domain value of any variable is bounded by the considered parameters,
whereas the results presented in this paper do not require any bound
on the domain values of variables.

Furthermore, a series of tractability results for ILP based on
restrictions on the constraint matrix $A$, instead of restrictions on
the primal graph, have been
obtained~\cite{DeLoeraHK:AGideasBook,HemmeckeOR:13,Onn2010}. These
results apply whenever the constraint matrix $A$ can be written as an
arbitrary large product of matrices of bounded size and are usually
referred to as $n$-fold ILP, two-stage stochastic ILP, and 4-block
$n$-fold ILP.


\section{Preliminaries}
\label{sec:prelim}
We will use standard graph terminology, see for
instance~\cite{Diestel12}.
A graph $G$ is a tuple $(V,E)$, where $V$ or
$V(G)$ is the vertex set and $E$ or $E(G)$ is the edge set. 
A graph $H$ is a subgraph of a graph $G$, denoted $H\subseteq G$, if $H$ can be obtained by deleting vertices and edges from $G$.
All our graphs are simple and loopless. 

A \emph{path} from vertex $v_1$ to vertex $v_j$ in $G$ is a sequence of distinct vertices $v_1,\dots,v_j$ such that for each $1\leq i< j$, $\{v_i,v_{i+1}\}\in E(G)$.
A \emph{tree} is a graph in which, for any two vertices $v,w\in G$, there is precisely one unique path from $v$ to $w$; a tree is \emph{rooted} if it contains a specially designated vertex $r$, the \emph{root}. Given a vertex $v$ in a tree $G$ with root $r$, the \emph{parent} of $v$ is the unique vertex $w$ with the property that $\{v,w\}$ is the first edge on the path from $v$ to $r$.

\subsection{Integer Linear Programming}
\label{sub:ilp}

For our purposes, it will be useful to view an ILP instance as a set
of linear inequalities rather than using the constraint
matrix. Formally, let an ILP instance $I$ be a tuple $(\cF,\eval)$
where $\cF$ is a set of linear inequalities over variables
$X=\{x_1,\dots,x_n\}$ and $\eval$ is a linear function over $X$ of the
form $\eval(X)=s_1x_1+\dots+s_nx_n$. Each inequality
$A\in\cF$ ranges over variables $\var(A)$ is said to have arity $|\var(A)|=l$ 
and is assumed to be of the form $c_{A,1}x_{A,1}+c_{A,2}x_{A,2}+\dots + c_{A,l}x_{A,l} \leq b_A$; we also define $\var(I)=X$. We say that two constraints are equal if they range over the same variables with the same coefficients and have the same right-hand side.

For a set of variables $Y$, let $\cF(Y)$ denote
the subset of $\cF$ containing all inequalities $A\in \cF$ such that
$Y\cap \var(A)\neq \emptyset$. We will generally use the term \emph{coefficients} to refer to numbers that occur in the inequalities in $\cF$. 
In some cases, we will be dealing with certain selected ``named'' variables which will not be marked with subscripts to improve readability (e.g., $a$); there, we may use $s_a$ to denote the coefficient of $a$ in $\eval$, i.e., $s_a$ is shorthand for $s_j$ where $a=x_j$.

An assignment $\alpha$ is a mapping from $X$ to $\mathbb{Z}$. For an
assignment $\alpha$ and an inequality $A$ of arity $l$, we denote by $A(\alpha)$
the left-side value of $A$ obtained by applying $\alpha$, i.e.,
$A(\alpha)=c_{A,1}\alpha(x_{A,1})+c_{A,2}\alpha(x_{A,2})+\dots+
c_{A,l}\alpha(x_{A,l})$. Similarly, we let $\eval(\alpha)$ denote the value of the linear function $\eval$ after applying $\alpha$. 

An
assignment $\alpha$ is called \emph{feasible} if it satisfies every
$A\in \cF$, i.e., if $A(\alpha)\leq b_A$ for each $A\in
\cF$. Furthermore, $\alpha$ is called a \emph{solution} if the value
of $\eval(\alpha)$ is maximized over all feasible assignments; observe
that the existence of a feasible assignment does not guarantee the
existence of a solution (there may exist an infinite sequence of
feasible assignments $\alpha$ with increasing values of
$\eval(\alpha)$). Given an instance $I$, the task in the ILP problem
is to compute a solution for $I$ if one exists, and otherwise to
decide whether there exists a feasible assignment. On the other hand, 
the \probfont{ILP-feasibility} problem asks whether a given instance $I$ 
admits a feasible assignment (here, we may assume 
without loss of generality that all coefficients in $\eval$ are equal to $0$).

Given an ILP instance $I=(\cF,\eval)$, the primal graph $G_I$ of $I$
is the graph whose vertex set is the set $X$ of variables in $I$, and
two vertices $a,b$ are adjacent iff either there exists some $A\in
\cF$ containing both $a$ and $b$ or $a,b$ both occur in $\eval$ with
non-zero coefficients. 

\subsection{Parameterized Complexity}
\label{sub:pc}

In parameterized
algorithmics~\cite{CyganFKLMPPS15,FlumGrohe06,book/Niedermeier06,book/DowneyF13}
the runtime of an algorithm is studied with respect to a parameter
$k\in\nn$ and input size~$n$.
The basic idea is to find a parameter that describes the structure of
the instance such that the combinatorial explosion can be confined to
this parameter.
In this respect, the most favorable complexity class is \FPT (\textit{fixed-parameter tractable})
which contains all problems that can be decided by an algorithm
running in time $f(k)\cdot n^{\bigO{1}}$, where $f$ is a computable
function.
Algorithms with this running time are called \emph{fpt-algorithms}.

To obtain our lower bounds, we will need the notion of a
\emph{parameterized reduction} and the complexity class
\paraNP~\cite{book/DowneyF13}. Since we obtain all our lower bounds already for \probfont{ILP-feasibility}, we only need to consider these notions for decision problems; formally, a \textit{parameterized decision problem} is a subset of $\Sigma^*\times\nn$, where $\Sigma$ is the input alphabet. 

Let $L_1$ and $L_2$ be parameterized decision problems, with $L_1\subseteq \Sigma_1^*\times\nn$ and $L_2\subseteq \Sigma_2^*\times\nn$. A \textit{parameterized reduction} (or fpt-reduction) from $L_1$ to $L_2$ is a mapping $P:\Sigma_1^*\times\nn\rightarrow\Sigma_2^*\times\nn$ such that:
\begin{enumerate}
\itemsep-0.2em
\item $(x,k)\in L_1$ if and only if $P(x,k)\in L_2$;   
\item the mapping can be computed by an fpt-algorithm with respect to parameter $k$;   
\item there is a computable function $g$ such that $k'\leq g(k)$, where $(x',k')=P(x,k)$. \end{enumerate}

There is a variety of classes capturing \textit{parameterized intractability}.
For our results, we require only the class \paraNP, which is defined as the class of problems that are solvable by a nondeterministic Turing-machine in fpt-time.
We will make use of the characterization of \paraNP-hardness given by Flum and Grohe~\cite{FlumGrohe06}, 
Theorem 2.14: any parameterized (decision) problem that remains \NP-hard when the parameter is set to some constant is \paraNP-hard.
Showing \paraNP-hardness for a problem rules out the existence of an fpt-algorithm under the assumption that $\P\neq\NP$. In fact, it even allows us to rule out algorithms running in time $n^{f(k)}$ for any function $f$ (these are sometimes called \emph{XP} algorithms).

For our algorithms, we will use the following result as a
subroutine. Note that this is a streamlined version of the original
statement of the theorem, as used in the area of parameterized
algorithms~\cite{FellowsLokshtanovMisraRS08,GanianKimSzeider15}.

\begin{THE}[\cite{Lenstra83,Kannan87,FrankTardos87}]
  \label{thm:pilp}
  An ILP instance $I=(\cF,\eval)$ can be solved in time
  $\bigoh(p^{2.5p+o(p)}\cdot |I|)$, where $p=|\var(I)|$.
\end{THE}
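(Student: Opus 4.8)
The plan is to reduce the optimization version to a sequence of feasibility tests and then to establish the stated bound for feasibility, which is the heart of the classical Lenstra--Kannan--Frank-Tardos machinery. For the reduction, I would first observe that the feasible region $P=\{x\in\mathbb{R}^p : x \text{ satisfies every inequality in } \cF\}$ is a rational polyhedron, and that if $\eval$ is bounded above on $P\cap\integers^p$ then its maximum is attained at a point of bounded bit-size (a standard consequence of Cramer's rule applied to the vertices of $P$). Hence the optimum can be located by binary search over the polynomially many candidate values of $\eval$, each step of which is an \probfont{ILP-feasibility} query on $P$ intersected with a halfspace $\eval(x)\ge t$; detecting unboundedness—i.e.\ feasibility without an attained solution—is handled separately by testing the recession cone of $P$ in the direction of $\eval$.

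The core feasibility routine is Lenstra's recursive algorithm. First I would apply an affine transformation that makes $P$ \emph{well-rounded}: using the L\"owner--John ellipsoid (or the ellipsoid method) one finds a map under which $P$ sandwiches two concentric balls whose radii differ by a factor depending only on $p$. The lattice $\integers^p$ is carried along by the same map, and I would compute a reduced basis of the image lattice via lattice basis reduction. Now the \emph{flatness theorem} applies: either the transformed body is large enough relative to the reduced lattice to guarantee a lattice point (read off directly from the basis), or there is a nonzero integer direction $c$ along which the \emph{lattice width} of $P$ is bounded by a function $f(p)$ of $p$ alone. In the latter case $c^Tx$ takes at most $f(p)+1$ distinct integer values on $P\cap\integers^p$, and for each such value $t$ I would recurse on the $(p-1)$-dimensional slice $P\cap\{c^Tx=t\}$.

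This recursion has depth $p$ and branching at most $f(p)$, so the search tree has size roughly $f(p)^p$; plugging in the sharpened flatness and reduction bounds of Kannan yields the stated exponent $p^{2.5p+o(p)}$. The remaining issue is the dependence on the input: a naive analysis produces a factor polynomial in the total bit-length of the coefficients rather than the clean linear factor $|I|$. Here I would invoke the Frank--Tardos preprocessing, which—via simultaneous Diophantine approximation—replaces the coefficients occurring in $\cF$ and $\eval$ by integers whose bit-length is bounded by a polynomial in $p$ alone, while preserving the sign of every inner product the algorithm inspects; this decouples the geometric work from the input size and leaves only a single linear $|I|$ factor, incurred by reading the instance.

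The step I expect to be the main obstacle is obtaining the flatness estimate with exactly the right quantitative dependence on $p$: the crude Khinchine bound is enough for polynomiality in fixed dimension but not for the $p^{2.5p+o(p)}$ exponent, and matching the latter requires Kannan's careful interleaving of basis reduction with a covering-radius analysis of the rounded body. Everything else—the affine rounding, the recursion bookkeeping, and the Frank--Tardos coefficient reduction—is comparatively routine once this estimate is in hand.
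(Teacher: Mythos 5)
The paper does not prove this theorem at all---it is imported as a black-box subroutine, in the streamlined form standard in parameterized algorithmics, from the cited works of Lenstra~\cite{Lenstra83}, Kannan~\cite{Kannan87}, and Frank and Tardos~\cite{FrankTardos87}. Your outline (binary search reducing optimization to feasibility, Lenstra's rounding/flatness recursion, Kannan's sharpened bounds yielding the $p^{2.5p+o(p)}$ factor, and Frank--Tardos coefficient reduction giving the linear dependence on $|I|$) is a correct reconstruction of exactly the argument contained in those references, so it matches the paper's (implicit) approach.
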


\subsection{Treewidth and Treedepth}
\label{sub:tdtw}

Treewidth is the most prominent structural parameter and has been extensively studied in a number of fields. In order to define treewidth, we begin with the definition of its associated decomposition.
    A \emph{tree-decomposition}~$\mathcal{T}$ of a graph $G=(V,E)$ is a pair 
  $(T,\chi)$, where $T$ is a tree and $\chi$ is a function that assigns each 
  tree node $t$ a set $\chi(t) \subseteq V$ of vertices such that the following 
  conditions hold: 
\begin{itemize}
  \item[(P1)] For every vertex~$u \in V$, there is a tree node~$t$ such that 
    $u \in \chi(t)$.
  \item[(P2)] For every edge $\{u,v\} \in E(G)$ there is a tree node
    $t$ such that $u,v\in \chi(t)$.
  \item[(P3)] For every vertex $v \in V(G)$,
    the set of tree nodes $t$ with $v\in \chi(t)$ forms a subtree of~$T$.
    \end{itemize}
  The sets $\chi(t)$ are called \emph{bags} of the decomposition~$\mathcal{T}$ and $\chi(t)$ 
  is the bag associated with the tree node~$t$. 
  The \emph{width} of a tree-decomposition 
  $(T,\chi)$ is the size of a largest bag minus~$1$.  A tree-decomposition of
  minimum width is called \emph{optimal}.  The \emph{treewidth} of a graph $G$,
  denoted by $\textup{tw}(G)$, is the
  width of an optimal tree decomposition of~$G$.

Another important notion that we make use of extensively is that of treedepth. 
Treedepth is a structural parameter closely related to treewidth, and the structure of graphs
of bounded treedepth is well understood~\cite{NOdM12}.
A useful way of thinking about graphs of bounded treedepth is that they are
(sparse) graphs with no long paths.

We formalize a few notions needed to define treedepth.
A \emph{rooted forest} is a disjoint union of rooted trees. 
For a vertex~$x$ in a tree~$T$ 
of a rooted forest, the \emph{height} (or {\em depth})
of~$x$ in the forest is the number of vertices in the path from 
the root of~$T$ to~$x$. The \emph{height of a rooted forest} is the maximum height of a vertex of the forest. 
\begin{DEF}[Treedepth]\label{def:td}
  Let the \emph{closure} of a rooted forest~$\cal F$ be the graph
  $\clos({\cal F})=(V_c,E_c)$ with the vertex set 
  $V_c=\bigcup_{T \in \cal F} V(T)$ and the edge set
  $E_c=\{xy \colon \text{$x$ is an ancestor of $y$ in some $T\in\cal F$}\}$.
  A \emph{treedepth decomposition}
  of a graph $G$ is a rooted forest $\cal F$ such that $G \subseteq \clos(\cal F)$.
  The \emph{treedepth} $\td(G)$ of a graph~$G$ is the minimum height of
  any treedepth decomposition of $G$. 
\end{DEF}

\noindent We will later use $T_x$ to denote the vertex set of the subtree of $T$ rooted at a vertex $x$ of $T$. Similarly to treewidth, it is possible to determine the treedepth of a graph in FPT time.
\begin{PRO}[\cite{NOdM12}]\label{pro:compute-td}
  Given a graph $G$ with $n$ nodes and a constant $w$, it is possible to
  decide whether $G$ has treedepth at most $w$, and if so, to compute an
  optimal treedepth decomposition of $G$ in time $\bigO{n}$.
\end{PRO}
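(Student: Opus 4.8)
The plan is to combine the standard recursive characterization of treedepth with a depth-first search that simultaneously certifies negative instances quickly and exposes enough structure to construct an optimal decomposition. Recall that treedepth obeys the recursion $\td(G)=0$ if $G$ is empty, $\td(G)=1+\min_{v\in V(G)}\td(G-v)$ if $G$ is connected and nonempty, and $\td(G)=\max_i\td(C_i)$ over the connected components $C_i$ of a disconnected $G$. In particular, an optimal treedepth decomposition of a connected $G$ is obtained by choosing a root $r$ minimizing $\td(G-r)$ and recursing on the components of $G-r$. This already yields a correct algorithm; the entire difficulty is to make it run in time $\bigO{n}$ for fixed $w$ rather than in the naive $n^{\bigO{w}}$.

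First I would reduce to the connected case (components are found in linear time, and treedepth is the maximum over them) and compute a DFS tree $T$ of each component. The key structural fact is the path obstruction: if $\td(G)\le w$ then $G$ contains no path on $2^w$ vertices. Since every root-to-leaf path of $T$ is a path of $G$, this means that as soon as some DFS tree reaches height $2^w$ I may immediately reject with $\td(G)>w$. Once a DFS tree has height $h<2^w$, two things are gained. First, because in a DFS tree of an undirected graph every non-tree edge joins a vertex to one of its at most $h-1$ ancestors, $G$ has at most $(n-1)+(h-1)n=\bigO{n}$ edges, so the whole instance is sparse. Second, $T$ is itself a (not necessarily optimal) treedepth decomposition of bounded height $h$, which sharply restricts how vertices may be arranged in \emph{any} decomposition.

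It remains to turn the bounded-height DFS tree into an \emph{optimal} decomposition in linear time. I would run a bottom-up dynamic programming pass along $T$: for each vertex $v$ I maintain a table describing, for the subgraph induced by the subtree rooted at $v$, the achievable heights of treedepth decompositions together with the pattern in which their root-to-$v$ paths interact with the at most $h-1$ ancestors of $v$. Because $h$ is bounded by a function of $w$, each table has size bounded by a function of $w$ alone, merging the children's tables at $v$ costs $\bigO{f(w)}$, and summing over the $\bigO{n}$ vertices and edges gives the claimed $\bigO{n}$ bound; a standard trace-back recovers the decomposition. (Alternatively, since $\td(G)\le w$ implies $\tw(G)\le w-1$, one can compute a width-$(w-1)$ tree decomposition in linear time for fixed $w$ and run the analogous DP on it; and for the decision version alone, closure of $\{G:\td(G)\le w\}$ under minors yields a finite obstruction set, hence an \MSOii formula to which Courcelle's theorem applies.)

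The main obstacle is exactly this last step: the recursion over candidate roots is a priori an \XP{} procedure costing $n^{\bigO{w}}$, and the whole point is that the bounded height obtained from the DFS — equivalently, the absence of long paths — simultaneously bounds the number of relevant ancestors and the size of the DP state, collapsing the search to bounded-size tables per vertex. Designing the interaction pattern stored in the state so that it is both \emph{complete} (no optimal decomposition is overlooked) and of size depending on $w$ only is the delicate part; everything else is linear-time bookkeeping justified by the sparsity and bounded-height guarantees above.
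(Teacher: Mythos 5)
The paper itself does not prove this proposition --- it imports it wholesale from Ne\v{s}et\v{r}il and Ossona de Mendez \cite{NOdM12} --- so your attempt has to be judged against the known proof from the literature, whose overall strategy you have correctly reconstructed: run DFS, reject as soon as the DFS tree reaches height $2^w$ (via the path obstruction), and otherwise exploit the fact that the DFS tree is itself a treedepth decomposition of height $h<2^w$ (in an undirected DFS all non-tree edges are back edges), which bounds the number of edges by $\bigO{n}$ and the treewidth by a function of $w$, and then finish with a bounded-size dynamic program. All of this scaffolding in your first two steps is correct and rigorously justified.

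The genuine gap is exactly where you flag it yourself: the dynamic-programming state. Declaring that each vertex stores ``the achievable heights together with the pattern in which root-to-$v$ paths interact with the ancestors of $v$,'' and conceding that making this pattern complete while keeping its size a function of $w$ ``is the delicate part,'' is a plan rather than a proof --- and that delicate part is the entire technical content of the result (working out such a DP over a bounded-height DFS tree is the subject of a separate paper, by Reidl, Rossmanith, S\'anchez Villaamil and Sikdar). You never argue why an optimal forest for $G$ can be assembled from per-subtree summaries of size $f(w)$; the difficulty is that in an optimal decomposition the vertices of a DFS subtree may be interleaved with the remainder of the graph in ways not captured by heights plus ancestor placements, and ruling this out (or canonicalizing it) is precisely what needs proving. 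Your parenthetical fallbacks do not close the hole either: the minor-obstruction/Courcelle route only \emph{decides} $\td(G)\leq w$ and yields no decomposition, and ``run the analogous DP'' on a tree decomposition re-raises the same undefined state. The cleanest way to finish along your lines is to use the equivalence of $\td(G)\leq w$ with the existence of a vertex ranking (centered coloring) with $w$ colors, note that such a ranking is expressible in \MSOi with $w$ free set variables over a graph of bounded treewidth, and invoke the witness-producing version of Courcelle's theorem to obtain the color classes; from these, a height-$w$ forest is recovered in linear time by recursively rooting each connected component at its unique vertex of maximum color.
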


\noindent The following alternative (equivalent) characterization of treedepth will be useful later for ascertaining the exact treedepth in our reduction (specifically in Lemma~\ref{the:hard-treedepth}).
\begin{PRO}[\cite{NOdM12}]\label{prop:alternativetd}
Let $G_i$ be the connected components of $G$. Then
\[
\td(G)=
\begin{cases}
1, & \text{if } |V(G)|=1;\\
1+\min_{v\in V(G)} td(G-v), & \text{if $G$ is connected and } |V(G)|>1;\\
\max_{i} td(G_i), & \text{otherwise.}
\end{cases}
\]
\end{PRO}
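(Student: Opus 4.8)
The plan is to establish the three cases separately, in each case deriving matching upper and lower bounds on $\td(G)$ straight from Definition~\ref{def:td}. The first case is immediate: the only rooted forest on a single vertex is a one-node tree of height $1$, so $\td(G)=1$ when $|V(G)|=1$. For the other two cases the key auxiliary tool is a \emph{restriction lemma}: for any treedepth decomposition $\cal F$ of $G$ and any $S\subseteq V(G)$, the rooted forest $\cal F[S]$ obtained by keeping the vertices of $S$ and declaring $u$ an ancestor of $v$ exactly when it was one in $\cal F$ satisfies $\clos(\cal F[S])=\clos(\cal F)[S]$. I would verify this simply by unwinding the definition of $\clos$, noting that the ancestor relation of $\cal F[S]$ is the restriction of that of $\cal F$. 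It follows that $\cal F[S]$ is a treedepth decomposition of the subgraph of $G$ induced on $S$, of height at most the height of $\cal F$.

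For the disconnected case, the upper bound $\td(G)\le\max_i\td(G_i)$ follows by taking an optimal treedepth decomposition of each component $G_i$ and forming their disjoint union; since the components are pairwise vertex- and edge-disjoint, the closure of this union contains $G$, and its height is exactly $\max_i\td(G_i)$. The matching lower bound follows by applying the restriction lemma with $S=V(G_i)$ to an optimal decomposition of $G$, which yields $\td(G_i)\le\td(G)$ for every $i$.

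For the connected case with $|V(G)|>1$, the upper bound $\td(G)\le 1+\td(G-v)$ holds for every vertex $v$: I would take an optimal decomposition of $G-v$ and attach all of its tree-roots as children of a newly introduced root $v$. As $v$ then becomes an ancestor of every remaining vertex, the closure of the resulting single tree contains all edges incident to $v$ together with all edges of $G-v$, while the height increases by exactly one. For the lower bound I would first argue that any optimal decomposition $\cal F$ of a connected graph on more than one vertex must consist of a single tree $T$: otherwise $\clos(\cal F)$ would be disconnected yet span all of $V(G)$, contradicting connectivity of $G$. Deleting the root $r$ of $T$ and invoking the restriction lemma with $S=V(G)\setminus\{r\}$ then exhibits a treedepth decomposition of $G-r$ of height at most $\td(G)-1$, so that $\td(G)\ge 1+\td(G-r)\ge 1+\min_{v}\td(G-v)$.

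The step I expect to demand the most care is this last lower bound, specifically the claim that the optimal forest collapses to a single tree. This is precisely where connectivity of $G$ is indispensable and where one must reason about the global structure of $\clos(\cal F)$ rather than perform a purely local modification; the other bounds are essentially constructions or direct appeals to the restriction lemma.
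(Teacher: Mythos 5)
Your proof is correct, but there is no proof in the paper to compare it against: Proposition~\ref{prop:alternativetd} is stated as an imported result, cited from Ne\v{s}et\v{r}il and Ossona de Mendez~\cite{NOdM12}, and the paper never proves it. Your argument is essentially the standard one by which the recursive characterization is derived from the closure-based definition (Definition~\ref{def:td}): a restriction lemma $\clos(\cF[S])=\clos(\cF)[S]$, disjoint unions of optimal decompositions for the disconnected case, attaching the forest of an optimal decomposition of $G-v$ below a new root $v$ for the upper bound, and the single-tree observation plus root deletion for the lower bound; this matches how the equivalence is established in the cited monograph. Two small points deserve explicit care if you write it out in full. First, under the paper's definition a treedepth decomposition may contain vertices outside $V(G)$, so the correct claim is not that an optimal $\cF$ \emph{is} a single tree, but that all of $V(G)$ lies in one tree of $\cF$ (trees disjoint from $V(G)$ can be discarded; applying your restriction lemma with $S=V(G)$ first also guarantees that the root you later delete is actually a vertex of $G$). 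Second, the restriction lemma alone only bounds the height of $\cF[S]$ by the height of $\cF$, whereas the lower-bound step needs height at most $\td(G)-1$; this requires the additional (immediate) observation that the root $r$ lies on every root-to-vertex path of the tree, so every vertex's depth drops by exactly one when you restrict to $S=V(G)\setminus\{r\}$. Neither point is a genuine gap---both are one-line fixes---and the overall structure of your argument is sound.
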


\noindent We conclude with a few useful facts about treedepth. 

\begin{PRO}[\cite{NOdM12}]
\label{pro:tdfacts}
~
\begin{enumerate}
\item If a graph $G$ has no path of length~$d$, then $\td(G)\leq d$. 
\item If~$\td(G) \leq d$, then~$G$ has no path of length~$2^d$.
\item $\tw(G) \leq \td(G)$.  
\item If~$\td(G) \leq d$, then $\td(G')\leq d+1$ for any graph $G'$ obtained by adding one vertex into $G$.
\end{enumerate}

Within this manuscript, for an ILP instance $I$ we will use \emph{treewidth (treedepth) of $I$} as shorthand for the treewidth (treedepth) of the primal graph $G_I$ of $I$.
\end{PRO}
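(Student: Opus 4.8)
The plan is to establish the four facts in turn by short self-contained arguments; I would treat them in the order (1), (4), (3), (2), since the constructions grow in delicacy and the last one is the real work.

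For Fact (1) the key tool is a depth-first search (DFS) forest $\cF$ of $G$. I would recall that in an undirected graph DFS never produces cross edges, so every edge of $G$ joins a vertex to one of its ancestors in $\cF$; this is precisely the condition $G\subseteq\clos(\cF)$, so $\cF$ is a treedepth decomposition. The crucial observation is that the longest root-to-node branch of $\cF$ is itself a path of $G$, so the height of $\cF$ is bounded by the number of vertices on a longest path of $G$. Hence an upper bound on path length yields the stated bound on $\td(G)$ (up to the convention of counting vertices versus edges on a path, which only shifts the constant).

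For Fact (4) I would take a treedepth decomposition $\cF$ of $G$ of height at most $d$, introduce the new vertex as a fresh common root, and attach every root of $\cF$ as its child. This produces a single rooted tree of height at most $d+1$; since the new vertex becomes an ancestor of all of $V(G)$, every edge incident to it lies in the closure, while all previous ancestor relations (and hence all old edges) are preserved, giving $G'\subseteq\clos(\cF')$ and therefore $\td(G')\le d+1$. For Fact (3), given a treedepth decomposition $\cF$ of height $d=\td(G)$, I would build a tree-decomposition by retaining the tree shape of $\cF$ (joined at a dummy empty root) and assigning to each vertex $x$ the bag consisting of $x$ together with all of its ancestors in $\cF$. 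Each bag then has size at most $d$, so the width is at most $d-1$. Properties (P1) and (P3) are immediate from the construction, and (P2) follows because $G\subseteq\clos(\cF)$ forces the two endpoints of any edge into an ancestor relation, placing both of them in the lower endpoint's bag; hence $\tw(G)\le d-1\le\td(G)$.

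For Fact (2), which I expect to be the main obstacle, I would argue by induction on $d$ using the recursive characterization of treedepth in Proposition~\ref{prop:alternativetd}. It suffices to treat connected $G$, since a longest path lives inside a single component; that characterization then supplies a vertex $v$ with $\td(G-v)\le d-1$, so by the induction hypothesis $G-v$ has no long path. Deleting $v$ from any path $P$ of $G$ splits $P$ into at most two subpaths of $G-v$, each short by hypothesis, and recombining these two pieces through $v$ bounds the length of $P$ by $2^{d}-1$, ruling out a path of length $2^{d}$. The delicate point here is bookkeeping the off-by-one in the path-length convention so that the doubling constant comes out as exactly $2^{d}$; once the counting is pinned down the induction closes cleanly, with the base case $d=1$ (edgeless graphs, hence no nontrivial path) being trivial.
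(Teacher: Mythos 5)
Your proposal is correct, but note that the paper does not actually prove this proposition: it is imported wholesale from Ne\v{s}et\v{r}il and Ossona de Mendez~\cite{NOdM12}, so there is no in-paper argument to compare against. Your four arguments are the standard ones from that source: the DFS-forest observation (no cross edges in undirected DFS, so the forest is a treedepth decomposition whose height is bounded by the longest path) for fact (1); re-rooting the decomposition under the new vertex for fact (4); the ancestor-closure bags, which indeed give the slightly stronger bound $\tw(G)\leq\td(G)-1$, for fact (3); and the delete-a-vertex induction via Proposition~\ref{prop:alternativetd} for fact (2). The one genuinely delicate point is the one you flagged yourself in fact (2): if you run the induction counting \emph{edges}, a path through the deleted vertex $v$ decomposes into two subpaths of at most $2^{d-1}-1$ edges each plus two edges incident to $v$, giving only the bound $2(2^{d-1}-1)+2=2^d$, which fails to exclude a path of length exactly $2^d$. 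Running the same induction counting \emph{vertices} gives $2(2^{d-1}-1)+1=2^d-1$ vertices, hence at most $2^d-2$ edges, which rules out a path of length $2^d$ under either convention; so the counting must be done on vertices, exactly as you suspected. With that bookkeeping fixed, all four arguments are sound and self-contained, which is more than the paper itself provides.
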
 

\section{Exploiting Treedepth to Solve ILP}
\label{sec:using}

Our goal in this section is to show that ILP is fixed parameter
tractable when parameterized by the treedepth of the primal graph and
the maximum coefficient in any constraint. We begin by formalizing our
parameters. Given an ILP instance $I$, let $\td(I)$ be the treedepth
of $G_I$ and let $\ell(I)$ be the maximum absolute coefficient
which occurs in any inequality in $I$; to be more precise,
$\ell(I)=\max \SB |c_{A,j}|, |b_A|\SM A\in \cF, j\in \Nat \SE$. When
the instance $I$ is clear from the context, we will simply write
$\ell$ and $k=\td(I)$ for brevity. 
We will now state our main algorithmic result of this section.

\begin{THE}
  \label{thm:main}
  ILP is fixed-parameter tractable parameterized by
  $\ell$ and $k$
\end{THE}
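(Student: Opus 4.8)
The plan is to reduce $I=(\cF,\eval)$ to an equivalent ILP with only $g(k,\ell)$ variables, for some computable $g$, and coefficients still bounded by $\ell$, and then to solve that instance with Theorem~\ref{thm:pilp}. First I would compute a treedepth decomposition $\mathcal{F}$ of $G_I$ of height $k$ using Proposition~\ref{pro:compute-td}. Distinct trees of $\mathcal{F}$ share neither variables nor inequalities (the variables of any inequality form a clique and hence lie in one component) and $\eval$ splits accordingly, so I may treat each component separately and, using Proposition~\ref{pro:tdfacts}(4), assume $\mathcal{F}$ is a single rooted tree $T$ of height $k$. Two structural facts drive everything. Since the variables of each $A\in\cF$ form a clique in $G_I\subseteq\clos(\mathcal{F})$, they lie on one root-to-leaf path; thus every inequality has arity at most $k$, and after assigning each inequality to its deepest variable, an inequality placed at $x$ constrains only $x$ together with its ancestors. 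Moreover, as all coefficients and right-hand sides lie in $\{-\ell,\dots,\ell\}$, there are only boundedly many distinct inequalities over a fixed ordered $k$-tuple of variables; consequently the subtrees of $T$ fall into a number $C(k,\ell)$ of isomorphism \emph{types}, where two subtrees have the same type if there is an isomorphism preserving all inequalities and the way they reference ancestor positions (objective coefficients are ignored).

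The algorithm then processes $T$ bottom-up, computing for each node $x$ at depth $d$ a succinct description of the function $\mathrm{opt}_x\colon\integers^{d-1}\to\integers\cup\{\pm\infty\}$ mapping an assignment $\beta$ of the $d-1$ ancestors of $x$ to the best value of $\eval$ restricted to $T_x$, subject to all inequalities placed in $T_x$. Since $\beta$ ranges over an infinite domain, I would not tabulate $\mathrm{opt}_x$ but represent it implicitly as the optimum of a \emph{gadget} ILP $\Pi_x$ with at most $g(d,\ell)$ variables and inequalities whose coefficients are bounded by $\ell$ and whose right-hand sides depend linearly (again with bounded coefficients) on $\beta$; its objective coefficients may be large, which is harmless for Theorem~\ref{thm:pilp}. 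In the base case $x$ is a leaf, and its inequalities become unary once $\beta$ is fixed; at the parent, all same-type leaf children share one feasible interval, so they collapse to two representative variables — the common upper and lower feasible extreme — weighted by the sums $S^{+}$ and $S^{-}$ of the positive and negative objective coefficients among those children. The multiplicity of children is thereby absorbed into two precomputed weights rather than producing one variable per child.

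For an internal node $x$ I would group its children by type (at most $C(k,\ell)$ of them) and, within each type $\tau$, group the same-type children further according to which \emph{extreme solution profile} of a type-$\tau$ subtree is optimal for that child's objective. The observation that makes this grouping independent of $\beta$ is that the set of tight inequalities at an optimum depends only on the objective direction and the fixed, bounded constraint coefficients, not on the right-hand sides; hence a child's group is determined by its objective alone, and the objectives within a group can be summed into a precomputed aggregate weight $W_j$. For each of the boundedly many profiles $j$ I insert one copy of the inductively built gadget $\Pi_\tau$ with objective $W_j$, which recomputes by recursion the optimum of a type-$\tau$ subtree for the aggregated objective. Collecting $x$'s own variable, its inequalities linking that variable to $\beta$, and these boundedly many gadget copies yields $\Pi_x$ of the required bounded size. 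The main obstacle is precisely this collapse under \emph{unbounded} variable domains and \emph{unbounded} objective coefficients: it is what forces the $\beta$-independent grouping, and it is also where integrality must be handled, since within one LP-optimal cone distinct objectives may have distinct integer optima. This I would control by a proximity argument confining integer optima to bounded boxes around the boundedly many LP vertices, keeping the number of profiles bounded by a function of $k$ and $\ell$; infeasible and unbounded subtrees are carried as the special values $-\infty$ and $+\infty$ and detected through feasibility and boundedness of the gadgets.

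At the root no ancestors remain, so $\Pi_{\mathrm{root}}$ is a concrete ILP with at most $g(k,\ell)$ variables and coefficients bounded by $\ell$; solving it with Theorem~\ref{thm:pilp} returns the optimum of $I$, or detects infeasibility or unboundedness, in time $h(k,\ell)\cdot|I|$, which is fixed-parameter tractable in $\ell$ and $k$. I expect the bounded arity of inequalities and the bounded-depth recursion — both consequences of bounded treedepth but not of bounded treewidth — to be exactly what the argument exploits, consistent with the treewidth hardness in Theorem~\ref{the:para-hard-tw-coeff}.
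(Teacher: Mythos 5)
Your high-level plan---shrink the instance to one with parameter-bounded many variables and finish with Theorem~\ref{thm:pilp}---is the paper's plan, and your subtree ``types'' are essentially the paper's equivalence $\sim$ defined via renaming functions. But the core of your argument, the collapse of unboundedly many same-type children at an \emph{internal} node, rests on a claim that is false. You group children by ``which extreme solution profile is optimal for that child's objective'' and assert this grouping is independent of the ancestor assignment $\beta$ because ``the set of tight inequalities at an optimum depends only on the objective direction\dots{} not on the right-hand sides.'' For $\max\{c^Tz : Mz\le b_0+B\beta\}$, a basis is optimal iff it is dual feasible \emph{and} primal feasible; only dual feasibility is independent of the right-hand side. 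The optimal value is the minimum, over dual-feasible bases, of a linear function of $\beta$, and which basis attains that minimum varies with $\beta$, so your partition of children into aggregate groups with weights $W_j$ cannot be fixed in advance. For the \emph{integer} optima you actually need, matters are worse still, and the appeal to proximity around LP vertices is not worked out and cannot by itself restore a $\beta$-independent grouping. Since this grouping is exactly what keeps your gadgets $\Pi_x$ parameter-bounded, the recursion as described does not go through. (A smaller inaccuracy: before any collapse, the number of subtree types is not bounded by $C(k,\ell)$ alone---it necessarily depends on subtree size, as in the paper's bound $\neqcl(\ell,k,i,m)$, which becomes a parameter bound only after bottom-up pruning has bounded the subtree sizes.)

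The reason the paper never faces your aggregation problem is an observation you missed: by the definition of the primal graph, all variables with non-zero coefficients in $\eval$ are pairwise adjacent, so they form a clique, have size at most $k$, and lie on a single root-to-leaf path of $T$. Hence at every node, all but at most one child subtree carries an identically zero objective, and for such subtrees only \emph{feasibility} matters. This is why the paper can afford the much cruder operation of outright \emph{deleting} a subtree $T_y$ whenever a sibling subtree $T_x$ with $x\sim y$ exists and both carry zero objective (Lemma~\ref{lem:pruning}): a feasible assignment of the surviving copy is simply transplanted onto $T_y$, which is sound because the renamed inequalities involve the same ancestors. Bounding the number of $\sim$-classes per node (Lemma~\ref{lem:subtrees}), pruning bottom-up (Corollary~\ref{cor:prune}, Lemma~\ref{lem:fullprune}) to reach an instance with at most $e_1$ variables, and then invoking Theorem~\ref{thm:pilp} completes the proof. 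If you add the clique observation to your write-up, your DP machinery becomes unnecessary: zero-objective children of the same type are mutually redundant, and the at most one objective-carrying child per node can be kept verbatim.
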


The main idea behind our fixed-parameter algorithm for ILP is to show
that we can reduce the instance into an ``equivalent instance'' such
that the number of variables of the reduced instance can be bounded
by our parameters $\ell$ and $k$. We then apply Theorem~\ref{thm:pilp}
to solve the reduced instance. 

For the following considerations, we fix an ILP instance
$I=(\cF,\eval)$ of size $n$ along with a treedepth decomposition $T$
of $G_I$ with depth $k$. 
Given a variable set $Y$, the operation of
\emph{omitting} consists of deleting all inequalities containing at
least one variable in $Y$ and all variables in $Y$; formally, omitting
$Y$ from $I$ results in the instance $I'=(\cF',\eval')$ where
$\cF'=\cF \setminus \cF(Y)$ and $\eval'$ is obtained by removing all
variables in $Y$ from $\eval$.

The following notion of equivalence will be crucial for the proof of Theorem~\ref{thm:main}.
Let $x,y$ be two
variables that share a common parent in $T$, and recall that $T_x$ ($T_y$) denotes the vertex set of the subtree of $T$ rooted at $x$ ($y$). We say that $x$ are $y$ are
\emph{equivalent}, denoted $x\sim y$, if there exists a bijective
function $\delta_{x,y}:T_x\rightarrow T_y$ (called the \emph{renaming
  function}) such that $\delta_{x,y}(\cF(T_x))=\cF(T_y)$; here
$\delta_{x,y}(\cF(T_x))$ denotes the set of inequalities in $\cF(T_x)$
after the application of $\delta_{x,y}$ on each variable in $T_x$. In other words, $x\sim y$ means that there exists a way of ``renaming'' the variables in $T_y$ so that $\cF(T_y)$ becomes $\cF(T_x)$.

  It
is easy to verify that $\sim$ is indeed an equivalence relation.
Intuitively, the following lemma shows that if $x \sim y$ for two variables $x$ and
$y$ of $I$, then (up to renaming) the set of all feasible
assignments of the variables in $T_x$ is equal to the set of all feasible
assignments of the variables in $T_y$; it will be useful to recall the meaning of $s_a$ from Subsection~\ref{sub:ilp}.
\begin{LEM}
  \label{lem:pruning}
  Let $x,y$ be two variables of $I$ such that $x\sim y$ and
  $s_a=0$ for each $a\in T_x\cup T_y$. Let
  $I'=(\cF',\eval')$ be the instance obtained from $I$ by omitting
  $T_y$. Then there exists a solution $\alpha$ of $\var(I)$ of value
  $w=\eval(\alpha)$ if and only if there exists a solution $\alpha'$ of
  $\var(I')$ of value $w=\eval'(\alpha')$. Moreover, a solution $\alpha$
  can be computed from any solution $\alpha'$ in linear time if the
  renaming function $\delta_{x,y}$ is known.
\end{LEM}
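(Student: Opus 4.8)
The plan is to prove the statement in both directions by exploiting the renaming function $\delta_{x,y}$ to transport feasible assignments between the subtrees rooted at $x$ and $y$. The key structural observation I would rely on is that, because $T$ is a treedepth decomposition, the only variables outside $T_y$ that can appear together with a variable of $T_y$ in a common inequality are the ancestors of $y$ in $T$; indeed, $G_I\subseteq\clos(T)$, so any edge of the primal graph (and hence any co-occurrence in an inequality) leaving $T_y$ must go to an ancestor of $y$. The same holds for $T_x$. Crucially, $x$ and $y$ share a common parent, so $T_x$ and $T_y$ have \emph{the same} set of strict ancestors above that parent, and $T_x\cap T_y=\emptyset$. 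This is what lets a single renaming of $T_y$ onto $T_x$ be applied consistently without disturbing the inequalities that reach up into the shared ancestors.

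For the forward direction, suppose $\alpha$ is a solution of $\var(I)$. I would define $\alpha'$ as the restriction of $\alpha$ to $\var(I')=\var(I)\setminus T_y$. Since $\cF'=\cF\setminus\cF(T_y)$ contains no inequality mentioning a variable of $T_y$, every $A\in\cF'$ has $A(\alpha')=A(\alpha)\le b_A$, so $\alpha'$ is feasible for $I'$; and because $s_a=0$ for all $a\in T_y$, omitting $T_y$ does not change the objective, giving $\eval'(\alpha')=\eval(\alpha)=w$. For the reverse direction, suppose $\alpha'$ is a solution of $\var(I')$. I would construct $\alpha$ by keeping $\alpha'$ on $\var(I')$ and, for each $a\in T_y$, setting $\alpha(a)=\alpha'(\delta_{x,y}^{-1}(a))$ — that is, copying onto $T_y$ the values that $\alpha'$ assigns to the corresponding variables of $T_x$. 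The inequalities of $I$ split into those in $\cF'$, which are satisfied exactly as before, and those in $\cF(T_y)$. For the latter I would use $\delta_{x,y}(\cF(T_x))=\cF(T_y)$: each inequality in $\cF(T_y)$ is the $\delta_{x,y}$-image of a corresponding inequality in $\cF(T_x)$ with identical coefficients and right-hand side, and by the ancestor observation any variable of such an inequality lying \emph{outside} $T_y$ is a shared ancestor that $\delta_{x,y}$ fixes, while the variables inside $T_y$ receive, by construction, exactly the values their $T_x$-counterparts have under $\alpha'$. Hence the left-hand value of each $A\in\cF(T_y)$ under $\alpha$ equals that of its preimage in $\cF(T_x)$ under $\alpha'$, which is at most $b_A$ by feasibility of $\alpha'$. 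Thus $\alpha$ is feasible, and since $s_a=0$ on $T_x\cup T_y$ we again get $\eval(\alpha)=\eval'(\alpha')=w$.

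The point requiring the most care — and what I expect to be the main obstacle — is verifying that the copied values on $T_y$ interact \emph{correctly} with the ancestor variables shared between the two subtrees. Specifically, one must check that no inequality in $\cF(T_y)$ mixes a variable of $T_y$ with a variable of $T_x$: if it did, the preimage inequality in $\cF(T_x)$ would mention a variable of $T_y$, contradicting that $\delta_{x,y}$ maps $T_x$ into $T_y$ bijectively while $T_x$ and $T_y$ are disjoint. The treedepth structure rules this out precisely because the only cross-subtree variables allowed are the common ancestors, which $\delta_{x,y}$ leaves untouched. I would therefore state and justify the ancestor claim explicitly at the start of the argument, since it is the linchpin that makes the two-sided value-preservation go through; once it is in place, the equality $\eval(\alpha)=\eval'(\alpha')$ together with the objective-invariance under omitting $T_y$ shows that optimal values coincide, and the moreover-clause follows immediately because $\alpha$ is obtained from $\alpha'$ by a single pass applying $\delta_{x,y}^{-1}$, which is linear-time given $\delta_{x,y}$.
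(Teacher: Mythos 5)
Your proposal is correct and follows essentially the same route as the paper's own proof: restriction of the solution for the forward direction, and extension via $\alpha(z)=\alpha'(\delta_{x,y}^{-1}(z))$ with the identity $A(\alpha)=A'(\alpha')$ for inequalities in $\cF(T_y)$ in the reverse direction. Your explicit justification that cross-subtree inequalities cannot exist (since any co-occurrence leaving $T_y$ must go to a common ancestor, by the treedepth decomposition) is a point the paper uses only implicitly, so making it explicit is a welcome refinement rather than a deviation.
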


\begin{proof}
  Let $\alpha$ be a solution of $\var(I)$ of value
  $w=\eval(\alpha)$. Since $\cF'\subseteq \cF$, it follows that setting
  $\alpha'$ to be a restriction of $\alpha$ to $\var(I)\setminus T_y$
  satisfies every inequality in $\cF'$. Since variables in $T_y$ do not
  contribute to $\eval$, it also follows that
  $\eval(\alpha)=\eval(\alpha')$.

  On the other hand, let $\alpha'$ be a solution of $\var(I')$ of
  value $w=\eval'(\alpha')$. Consider the assignment $\alpha$ obtained
  by extending $\alpha'$ to $T_y$ by reusing the assignments of $T_x$
  on $T_y$. Formally, for each $z\in T_y$ we set
  $\alpha(z)=\alpha'(\delta^{-1}_{x,y}(z))$ and for all other variables
  $w\in \var(I')$ we set $\alpha(w)=\alpha'(w)$. By assumption,
  $\alpha$ and $\alpha'$ must assign the same values to any variable
  $w$ such that $s_w\neq 0$, and hence
  $\eval(\alpha)=\eval(\alpha')$. To argue feasibility, first observe
  that any $A\in \cF'$ must be satisfied by $\alpha$ since $\alpha$
  and $\alpha'$ only differ on variables which do not occur in
  $I'$. Moreover, by definition of $\sim$ for each $A\in \cF\setminus
  \cF'=\cF(T_y)$ there exists an inequality $A'\in \cF'$ such that
  $\delta_{x,y}(A')=A$. In particular, this implies that
  $A(\alpha)=A'(\alpha)=A'(\alpha')$, and since $A'(\alpha')\leq
  b_{A'}=b_A$ we conclude that $A(\alpha)\leq b_A$. Consequently,
  $\alpha$ satisfies $A$. 

  The final claim of the lemma follows from the construction of $\alpha$ described above.
\end{proof}

\newcommand{\neqcl}{\textup{\#C}}

In the following let $z$ be a variable of $I$ at depth $k-i$ in $T$
for every $i$ with $1\leq i < k$ and let $Z$ be the set of all
children of $z$ in $T$. Moreover, let $m$ be the maximum size of any
subtree rooted at a child of $z$ in $T$, i.e., $m:=\max_{z' \in Z}|T_{z'}|$.
We will show next that the number of equivalence classes among the children
of $z$ can be bounded by the
function $\neqcl(\ell,k,i,m):=2^{(2\ell+1)^{k+1}\cdot  m^i}$.
Observe that this bound depends only on $\ell$, $k$, $m$, and $i$ 
and not on the size of $I$. 
\begin{LEM}
  \label{lem:subtrees}
  The equivalence relation $\sim$ has at most $\neqcl(\ell,k,i,m)$ equivalence
  classes over $Z$.
\end{LEM}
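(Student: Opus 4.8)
The plan is to exploit the treedepth structure to assign each child $z'\in Z$ a bounded ``type'' with the property that equal types force equivalence; the number of classes is then at most the number of types, and I count the latter.

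First I would isolate the one structural fact that drives everything: since $G_I\subseteq\clos(T)$, the variables occurring together in any single inequality are pairwise adjacent in $G_I$, hence pairwise comparable in $T$, i.e.\ they lie on a single root-to-leaf path. Now $z'\in Z$ sits at depth $k-i+1$, so $T_{z'}$ occupies only the $i$ deepest levels and has height at most $i$. Consequently every $A\in\cF(T_{z'})$ contains at most $i$ variables from $T_{z'}$ and at most $k-i$ variables among the proper ancestors of $z'$, and all those ancestors lie on the fixed root-to-$z$ path, which is the same for every child. In particular the subtrees of two distinct children share no inequality other than through these common ancestors, so $\cF(T_{z'})$ ``sees'' only $T_{z'}$ together with the $\le k-i$ ancestor variables.

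Next I would encode the class. Fix once and for all a global linear order of $\var(I)$; this orders the vertices of each $T_{z'}$ and lets me relabel them $1,\dots,|T_{z'}|$. Replacing every $T_{z'}$-variable in $\cF(T_{z'})$ by its label while leaving ancestor variables untouched yields an abstract constraint set $\hat\cF(z')$. The point is that $\hat\cF(z')=\hat\cF(z'')$ already forces $z'\sim z''$: composing the two labelings gives a bijection $\delta\colon T_{z'}\to T_{z''}$ that fixes all ancestors and, by construction, satisfies $\delta(\cF(T_{z'}))=\cF(T_{z''})$. (We may assume every variable of $I$ occurs in some inequality, as a variable appearing in no inequality with zero objective coefficient is irrelevant; then $|T_{z'}|$ is simply the number of labels occurring in $\hat\cF(z')$ and needs no separate tracking.) Thus $\hat\cF(z')$ is a complete separator, and the number of classes over $Z$ is at most the number of possible sets $\hat\cF(z')$.

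It then remains to count these sets, which I would do by bounding the ground set of possible abstract inequalities. Each such inequality is specified by: a coefficient in $\{-\ell,\dots,\ell\}$ for each of the $\le k-i$ fixed ancestors (with $0$ meaning ``absent''), giving $(2\ell+1)^{k-i}$ choices; at most $i$ subtree labels with their coefficients, overcounted by $m^i\cdot(2\ell+1)^i$; and a right-hand side, giving a factor $2\ell+1$. Multiplying yields at most $(2\ell+1)^{k-i}\cdot m^i\cdot(2\ell+1)^i\cdot(2\ell+1)=(2\ell+1)^{k+1}m^i$ possible abstract inequalities, and since $\hat\cF(z')$ is a subset of this ground set there are at most $2^{(2\ell+1)^{k+1}m^i}=\neqcl(\ell,k,i,m)$ types, as claimed. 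The step needing the most care is the interface between the definition of $\sim$ and the counting: the renaming $\delta$ need not be a tree isomorphism, so the correct invariant is the labeled constraint structure rather than the shape of $T_{z'}$; and one must use the chain property to cap at $i$ the number of subtree variables per inequality, which is exactly what yields the $m^i$ factor and stops the exponent from growing with $|T_{z'}|$. The remaining estimates are routine overcounting.
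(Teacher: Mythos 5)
Your proof is correct and follows essentially the same route as the paper's: canonically relabel the variables of each subtree $T_{z'}$, observe that equal relabeled constraint sets force equivalence under $\sim$, and bound the number of such sets by $2^{(2\ell+1)^{k+1}m^i}$ via the chain structure that treedepth imposes on the variables of any single inequality. If anything, you are slightly more careful than the paper, since you make explicit both the clique-to-chain argument and the need to control $|T_{z'}|$ (via the assumption that every variable occurs in some inequality) so that equal types yield an actual bijection.
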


\begin{proof}
  Consider an element $a\in Z$. By construction of $G_I$, each inequality $A\in \cF(T_a)$ only
  contains at most $k-i$ variables outside of $T_a$ (specifically, the
  ancestors of $a$) and at most $i$ variables in $T_a$. Furthermore,
  $b_A$ and each coefficient of a variable in $A$ is an integer whose
  absolute value does not exceed $\ell$. From this it follows that
  there exists a finite number of inequalities which can occur in
  $\cF(T_a)$. Specifically, the number of distinct combinations of
  coefficients for all the variables in $A$ and for $b_A$ is
  $(2\ell+1)^{k+1}$, and the number of distinct choices of variables
  in $\var(A)\cap T_a$ is upper-bounded by ${m\choose i}$, and so we
  arrive at $|\cF(T_a)|\leq (2\ell+1)^{k+1}\cdot {m\choose i}\leq
  (2\ell+1)^{k+1}\cdot m^i$. 

  Consequently, the set of inequalities for each child $y\in Z$ of $z$ has bounded cardinality.
%
  We will use this to bound the number of equivalence classes in $\neqcl(\ell,k,i,m)$ 
  by observing that two elements are equivalent if and only if they occur in precisely
  the same sets of inequalities (up to renaming). To formalize this intuition,
  we need a formal way of canonically renaming all
  variables in the individual subtrees rooted in $Z$; without
  renaming, each $\cF(T_y)$ would span a distinct set of variables and
  hence it would not be possible to bound the set of all such inequalities. 
So, for each $y$ let $\delta_{y,x_0}$ be a bijective
  renaming function which renames all of the variables in $T_y$ to the
  variable set $\{x_0^1, x_0^2,\dots,x_0^{|T_y|}\}$ (in an arbitrary
  way). Now we can formally define $\Gamma_z=\SB \cF(T_{x_0})
  \SM\delta_{y,x_0}(\cF(T_y)), y\in Z \SE$, and observe that
  $\Gamma_z$ has cardinality at most $2^{(2\ell+1)^{k+1}\cdot
    m^i}=\neqcl(\ell,k,i,m)$.
  To conclude the proof, recall
  that if two variables $a,b$ satisfy
  $\cF(T_a)=\delta_{b,a}(\cF(T_b))$ for a bijective renaming function
  $\delta_{b,a}$, then $b\sim a$. Hence, the absolute bound on the
  cardinality of $\Gamma_z$ implies that $\sim$ has at most
  $\neqcl(\ell,k,i,m)$ equivalence classes over~$Z$.
\end{proof}
It follows from the above Lemma that if $z$ has more than
$\neqcl(\ell,k,i,m)$ children, then two of those must be
equivalent. The next lemma shows that it is also possible to find such
a pair of equivalent children efficiently.
\begin{LEM}
  \label{lem:findprune}
  Given a subset $Z'$ of $Z$ with $|Z'|=\neqcl(\ell,k,i,m)+1$, then in
  time $\bigoh(\neqcl(\ell,k,i,m)^2\cdot m!m)$ one can find two children
  $x$ and $y$ of $Z$ such that $x \sim y$ together with a renaming function
  $\delta_{x,y}$ which certifies this. 
\end{LEM}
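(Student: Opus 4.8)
The plan is to combine the pigeonhole principle with an exhaustive but carefully budgeted search. By Lemma~\ref{lem:subtrees}, the relation $\sim$ has at most $\neqcl(\ell,k,i,m)$ equivalence classes over $Z$, and hence over $Z'$. Since $|Z'|=\neqcl(\ell,k,i,m)+1$ strictly exceeds the number of classes, the pigeonhole principle guarantees that at least two elements $x,y\in Z'$ satisfy $x\sim y$. It therefore remains only to \emph{locate} such a pair and to \emph{produce} a renaming function witnessing the equivalence.

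First I would iterate over all unordered pairs $\{x,y\}\subseteq Z'$; there are $\binom{|Z'|}{2}=\bigoh(\neqcl(\ell,k,i,m)^2)$ of them. For a fixed pair I would immediately discard it if $|T_x|\neq|T_y|$, since no bijection $T_x\to T_y$ can then exist. Otherwise, as $|T_x|=|T_y|\leq m$, I would enumerate all bijections $\delta\colon T_x\to T_y$; there are at most $m!$ of them, and each fixes the ancestors shared by $x$ and $y$ automatically, since those vertices lie outside $T_x$. For each candidate $\delta$ I would test directly whether $\delta(\cF(T_x))=\cF(T_y)$ by applying $\delta$ to every inequality of $\cF(T_x)$ and checking set equality against $\cF(T_y)$, for instance via a precomputed hash table storing canonical encodings of the inequalities of $\cF(T_y)$. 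If some $\delta$ passes this test then, by the definition of $\sim$, we have $x\sim y$ and $\delta$ is a valid renaming function $\delta_{x,y}$; we output $(x,y,\delta)$ and stop.

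Correctness follows because $x\sim y$ holds precisely when some bijection $\delta$ satisfies $\delta(\cF(T_x))=\cF(T_y)$, so the exhaustive search over bijections decides equivalence exactly, while the pigeonhole argument above guarantees that the search succeeds on at least one pair. For the running time, the two outer loops contribute the factors $\neqcl(\ell,k,i,m)^2$ and $m!$. The cost of a single equivalence test reduces to forming $\delta(\cF(T_x))$ and comparing it with $\cF(T_y)$; here I would invoke the bound established inside the proof of Lemma~\ref{lem:subtrees}, namely that each $\cF(T_a)$ contains only a bounded number of inequalities, each of arity at most $k+1$, so that a single test runs in time linear in the subtree size (with the parameter-dependent factors absorbed into the constant). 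This yields the claimed bound $\bigoh(\neqcl(\ell,k,i,m)^2\cdot m!\,m)$.

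The main obstacle is controlling the cost of a single equivalence test so that it fits the per-bijection budget: one must argue that $\delta(\cF(T_x))$ can be built and compared against $\cF(T_y)$ without rescanning the whole instance, which is exactly where the boundedness of $|\cF(T_a)|$ and of the arities from Lemma~\ref{lem:subtrees} is essential. A secondary point requiring care is the choice of a canonical encoding of inequalities, so that set equality can be tested efficiently and so that the shared ancestor variables are treated consistently across $T_x$ and $T_y$ when $\delta$ is applied.
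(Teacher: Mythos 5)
Your proof is correct and takes essentially the same route as the paper's: pigeonhole via Lemma~\ref{lem:subtrees} to guarantee an equivalent pair in $Z'$, then an exhaustive search over all $\bigoh(\neqcl(\ell,k,i,m)^2)$ pairs and all at most $m!$ bijections, testing each candidate $\delta$ by checking $\delta(\cF(T_x))=\cF(T_y)$ directly. The additional implementation details you supply (discarding pairs with $|T_x|\neq|T_y|$, hashing canonical encodings) are harmless refinements of the same exhaustive-search algorithm, and your accounting of the per-test cost is no less careful than the paper's own.
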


\begin{proof}
  Consider the following algorithm $\mathbb{A}$. First, $\mathbb{A}$
  computes a subset $Z'$ consisting of exactly (arbitrarily chosen) $\neqcl(\ell,k,i,m)+1$
  children of $Z$. Then $\mathbb{A}$
  branches over all distinct pairs $x,y\in Z'$ in time at most
  $\bigoh(\neqcl(\ell,k,i,m)^2)$. Second, $\mathbb{A}$ branches over all of the at most $m!$
  bijective renaming functions $\delta_{x,y}$. Third, $\mathbb{A}$
  computes $\delta_{x,y}(\cF(T_x))$ and tests whether it is equal to
  $\cF(T_y)$ (which takes at most $\bigoh(m)$ time); if this is the case, then
  $\mathbb{A}$ terminates and outputs $x,y$ and $\delta_{x,y}$.

  We argue correctness. By Lemma~\ref{lem:subtrees} and due to the
  cardinality of $Z'$, there must exist $x,y\in Z'$ such that $x\sim
  y$. In particular, there must exist a renaming function
  $\delta_{x,y}$ such that $\delta_{x,y}(\cF(T_x))=\cF(T_y)$. But then
  $\mathbb{A}$ is guaranteed to find such $x,y,\delta_{x,y}$ since it
  performs an exhaustive search.
\end{proof}

Combining Lemma~\ref{lem:pruning} and Lemma~\ref{lem:findprune}, we arrive at the following corollary. 

\begin{COR}
  \label{cor:prune}
  If
  $|Z|>\neqcl(\ell,k,i,m)+1$, then in
  time $\bigoh(\neqcl(\ell,k,i,m)^2\cdot m!m)$ one can 
  compute a subinstance $I'=(\cF',\eval)$ of $I$ with strictly less
  variables and the following property: there exists a solution $\alpha$
  of $I$ of value $w=\eval(\alpha)$ if and only if there exists a
  solution $\alpha'$ of $I'$ of value $w$. Moreover, a solution $\alpha$
  can be computed from any solution $\alpha'$ in linear time.
\end{COR}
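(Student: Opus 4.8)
The plan is to chain Lemma~\ref{lem:findprune} and Lemma~\ref{lem:pruning}: first locate a pair of equivalent children $x\sim y$ of $z$ together with a certifying renaming function $\delta_{x,y}$, and then omit $T_y$ to obtain the reduced instance $I'$. The only subtlety is that Lemma~\ref{lem:pruning} requires $s_a=0$ for every $a\in T_x\cup T_y$ and replaces $\eval$ by $\eval'$, whereas the corollary keeps $\eval$ unchanged and still promises value preservation. Hence the real work is to guarantee that the pruned pair $x,y$ carries no objective weight in its subtrees.

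The key observation that makes this work is the following. By definition of $G_I$, all variables occurring in $\eval$ with a nonzero coefficient are pairwise adjacent, so they form a clique in $G_I$. Since $G_I\subseteq\clos(T)$ and two vertices are adjacent in $\clos(T)$ only if one is an ancestor of the other, any clique of $G_I$ is totally ordered by the ancestor relation, i.e.\ it lies on a single root-to-leaf path of $T$. Consequently, among the children $Z$ of $z$, at most one child $z^\star$ can have an objective variable inside its subtree $T_{z^\star}$: two objective variables sitting in the subtrees of two \emph{distinct} children of $z$ would be incomparable in $T$, hence non-adjacent in $G_I$, contradicting the clique property.

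Given this, I would proceed as follows. Set $Z^\circ=Z\setminus\{z^\star\}$ if such a child $z^\star$ exists and $Z^\circ=Z$ otherwise; by construction $s_a=0$ for every variable $a$ lying in $T_{z'}$ with $z'\in Z^\circ$. Since $|Z|>\neqcl(\ell,k,i,m)+1$, we have $|Z^\circ|\geq |Z|-1\geq \neqcl(\ell,k,i,m)+1$, so I can select a subset $Z'\subseteq Z^\circ$ with $|Z'|=\neqcl(\ell,k,i,m)+1$. Applying Lemma~\ref{lem:findprune} to $Z'$ yields, in time $\bigoh(\neqcl(\ell,k,i,m)^2\cdot m!m)$, two children $x,y$ with $x\sim y$ and a renaming function $\delta_{x,y}$. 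Because $x,y\in Z^\circ$, we have $s_a=0$ for all $a\in T_x\cup T_y$, so the hypotheses of Lemma~\ref{lem:pruning} are met.

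Finally, I would let $I'=(\cF',\eval')$ be the instance obtained from $I$ by omitting $T_y$. Since $s_a=0$ for every $a\in T_y$, deleting the variables of $T_y$ from $\eval$ leaves it unchanged, so $\eval'=\eval$ and $I'=(\cF',\eval)$ as required; moreover $I'$ has strictly fewer variables than $I$ because $T_y\neq\emptyset$. Lemma~\ref{lem:pruning} then supplies exactly the claimed value-preserving equivalence between the solutions of $I$ and $I'$, together with the linear-time reconstruction of a solution $\alpha$ of $I$ from any solution $\alpha'$ of $I'$ via $\delta_{x,y}$. The running time is dominated by the single call to Lemma~\ref{lem:findprune}, matching the stated bound, since identifying $z^\star$ and forming $Z^\circ$ costs only linear time. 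The one genuinely non-routine step is the clique/path argument isolating the unique objective-bearing child $z^\star$; once that is in place, the rest is bookkeeping around the two lemmas.
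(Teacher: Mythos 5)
Your proof is correct and takes essentially the same route as the paper's: both hinge on the observation that the objective variables form a clique in $G_I$ and hence lie on a single root-to-leaf path of $T$, so at most one child of $z$ can carry objective weight in its subtree, and after discarding it one chains Lemma~\ref{lem:findprune} with Lemma~\ref{lem:pruning} (your explicit remark that $\eval'=\eval$ because $s_a=0$ on $T_y$ is a detail the paper leaves implicit). The only difference is bookkeeping order: the paper first picks $\neqcl(\ell,k,i,m)+2$ arbitrary children and removes the at most one bad child from that small set precisely to avoid scanning all of $Z$, whereas your identification of $z^\star$ over the whole of $Z$ incurs an extra linear term not covered by the stated $\bigoh(\neqcl(\ell,k,i,m)^2\cdot m!m)$ bound---a harmless slack (fixable, e.g., by walking up from the at most $k$ objective variables) that does not affect the overall fixed-parameter tractability.
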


\begin{proof}
  In order to avoid having to consider all children of $z$, the
  algorithm first computes (an arbitrary) subset $Z'$ of $Z$ such that
  $|Z'|=\neqcl(\ell,k,i,m)+2$. Then to be able to apply
  Lemma~\ref{lem:findprune} without changing the set of solutions of
  $I$, the algorithm computes a subset $Z''$ of $Z'$ 
  such that $|Z''|=\neqcl(\ell,k,i,m)+1$ and for every $z' \in Z''$ it holds 
  that $s_{z''}=0$ for every $z'' \in T_{z'}$. Note that since there are
  at most $k$ variables of $I$ with non-zero coefficients in $\eval$
  and these variables form a clique in $G_I$, all of them occur only
  in a single branch of $T_z$. It follows that $Z''$ as specified
  above exists and it can be obtained from $Z'$ by removing the (at
  most one) element $z'$ in $Z'$ with $s_{z''}\neq 0$ for some $z'' \in T_{z'}$.
  Observe that this step of the algorithm takes time at most
  $\bigoh(m\cdot (\neqcl(\ell,k,i,m)+1))$.
  
  The algorithm then proceeds as
  follows. 
  It uses
  Lemma~\ref{lem:findprune} to find two variables $x,y\in Z''$ such that
  $x\sim y$ and computes $I'$ from $I$ by omitting $T_y$ from
  $I$. The running time of the algorithm follows
  from Lemma~\ref{lem:findprune} since the running times of the other
  steps of the algorithm are dominated by the application of
  Lemma~\ref{lem:findprune}.
  The corollary now follows from Lemma~\ref{lem:pruning} and Lemma~\ref{lem:findprune}, which certify that:
  \begin{itemize}
  \item there exists a solution $\alpha$
  of $I$ of value $w=\eval(\alpha)$ if and only if there exists a
  solution $\alpha'$ of $I'$ of value $w$, and
  \item a solution $\alpha$
  can be computed from any solution $\alpha'$ in linear time. \qedhere
  \end{itemize}
\end{proof}

Let $e_i$ and $d_i$ for every $i$ with $1 \leq i \leq k$ be defined
inductively by setting $e_k=1$, $d_k=0$, $d_i=\neqcl(\ell,k,i,s_{i+1})+1$, and 
$e_i=d_i e_{i+1}+1$.
The following Lemma shows that in time
$\bigoh(|I|d_1^2\cdot e_1!e_1)$ one can compute an ``equivalent''
subinstance $I'$ of $I$ containing at most $e_1$ variables. 
Informally, $e_i$ is an upper bound on the number of nodes in a subtree rooted at depth $i$ and $d_i$ is an upper bound on the number of children of a node at level $i$ in $I'$. 

\begin{LEM}
  \label{lem:fullprune}
  There exists an algorithm that takes as input $I$ and $T$, runs in
  time $\bigO{|I|d_1^2\cdot e_1!e_1}$ and outputs
  an ILP instance $I'$ containing at most
  $e_1$ variables with the following
  property: there exists a solution $\alpha$ of $I$ of value
  $w=\eval(\alpha)$ if and only if there exists a solution $\alpha'$ of
  $I'$ of value $w=\eval'(\alpha')$. 
  Moreover, a solution $\alpha$ can be computed from any solution $\alpha'$ in linear time.
\end{LEM}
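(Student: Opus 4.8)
The plan is to apply Corollary~\ref{cor:prune} repeatedly, sweeping through the levels of $T$ from the bottommost depth up to the root, so that at each internal node we prune away redundant equivalent children until its branching is bounded as dictated by the recurrence. Before starting, if $T$ happens to be a forest we attach a single \emph{virtual root} whose children are the roots of the trees of $T$; this serves only to bound the number of top-level subtrees (the new root carries no constraints and leaves $I$ itself untouched), and by Proposition~\ref{pro:tdfacts}(4) it costs at most one in the height. We then process the nodes of $T$ in order of decreasing depth.

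The heart of the argument is the following invariant, proved by induction on the level $i$ running from $k$ down to $1$: once every node of depth at least $i$ has been processed, every subtree $T_z$ rooted at a node $z$ of depth $i$ contains at most $e_i$ variables. The base case $i=k$ holds because a depth-$k$ node is a leaf and $e_k=1$. For the inductive step, consider a node $z$ at depth $i$. By the induction hypothesis each of its children sits at depth $i+1$ and already roots a subtree of size at most $e_{i+1}$, so the quantity $m=\max_{z'}|T_{z'}|$ occurring in Corollary~\ref{cor:prune} satisfies $m\le e_{i+1}$. Hence, as long as $z$ has more than $d_i=\neqcl(\ell,k,i,e_{i+1})+1$ children, Corollary~\ref{cor:prune} applies and lets us omit one child subtree, strictly decreasing the number of variables while preserving the set of attainable optimal values. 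Repeating this until $z$ has at most $d_i$ children yields $|T_z|\le d_i\cdot e_{i+1}+1=e_i$, as required. Applying the invariant at the (virtual) root gives the claimed bound of $e_1$ variables on the output instance $I'$.

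Correctness of the reduction follows by composing the guarantees of Corollary~\ref{cor:prune} over the individual pruning steps: each step preserves the existence of a solution of any prescribed value $w$, and each comes with a linear-time procedure lifting a solution of the smaller instance back to the larger one, so chaining these procedures reconstructs a solution $\alpha$ of $I$ from any solution $\alpha'$ of $I'$ in linear time. Note that Corollary~\ref{cor:prune} already handles the objective function, as it only ever prunes subtrees all of whose variables have coefficient $0$ in $\eval$; thus the value of $\eval$ is never affected.

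For the running time, every invocation of Corollary~\ref{cor:prune} removes at least one variable, so the total number of invocations is at most $|\var(I)|\le |I|$. An invocation at a node of depth $i$ costs $\bigO{\neqcl(\ell,k,i,m)^2\cdot m!\,m}$ with $m\le e_{i+1}\le e_1$, and since the $e_i$ are nonincreasing in $i$ and $d_1$ dominates the relevant per-level values, each invocation is bounded by $\bigO{d_1^2\cdot e_1!\,e_1}$, giving an overall bound of $\bigO{|I|d_1^2\cdot e_1!\,e_1}$. The main obstacle is getting the bottom-up order right, so that the hypothesis $m\le e_{i+1}$ is genuinely in force when $z$ is processed (i.e.\ that all deeper levels are already fully pruned), together with checking that the equivalence-class bound of Lemma~\ref{lem:subtrees} survives earlier prunings — which it does, since pruning removes entire child subtrees and therefore does not alter $\cF(T_{z'})$ for any surviving child $z'$.
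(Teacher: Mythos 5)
Your proof follows the paper's argument essentially verbatim: bottom-up application of Corollary~\ref{cor:prune}, an induction on the depth $i$ showing every surviving node at depth $i$ has at most $d_i$ children and roots a subtree of at most $e_i$ variables, and the same accounting of at most $|I|$ invocations each costing $\bigO{d_1^2\cdot e_1!e_1}$, with correctness and the linear-time lifting of solutions inherited from Corollary~\ref{cor:prune}. Your extra touches (the virtual root for the forest case, and the observation that pruning entire child subtrees does not disturb the equivalence classes of surviving siblings) are clarifications of points the paper leaves implicit, not a different approach.
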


\begin{proof}
  The algorithm exhaustively applies Corollary~\ref{cor:prune} to every
  variable of $T$ in a bottom-up manner, i.e., it starts by applying the
  corollary exhaustively to all variables at depth $k-1$ and then
  proceeds up the levels of $T$ until it reaches depth $1$. Let $T'$
  be the subtree of $T$ obtained after the exhaustive application of 
  Corollary~\ref{cor:prune} to $T$.

  We will
  first show that if $x$ is a variable at depth $i$ of $T'$,
  then $x$ has at most $d_i$ children and $|T_x'|\leq e_i$. 
  We will show the claim by induction on the depth $i$ starting from
  depth $k$. Because all variables $x$ of $T$ at level $k$ are leaves,
  it holds that $x$ has $0=d_k$ children in $T'$ and $|T_x'|=1\leq e_k$,
  showing the start of the induction. Now let $x$ be a variable at
  depth $i$ of $T'$ and let $y$ be a child of $x$ in $T'$. It follows
  from the induction hypothesis that $|T_y'|\leq e_{i+1}$. Moreover,
  using Corollary~\ref{cor:prune}, we obtain that 
  $x$ has at most
  $\tilde \neqcl(\ell,k,i,e_{i+1})+1=d_i$ children in $T'$
  and thus $|T_x'|\leq d_i e_{i+1}+1=e_i$, as required.

  The running time of the algorithm now follows from the observation
  that (because every application of Corollary~\ref{cor:prune} removes
  at least one variable of $I$) Corollary~\ref{cor:prune} is applied
  at most $|I|$ times and moreover the maximum running time of any
  call to Corollary~\ref{cor:prune} is at most $\bigoh(d_1^2\cdot e_1!e_1)$.
  Correctness and the fact that $\alpha$ can be computed from
  $\alpha'$ follow from Corollary~\ref{cor:prune}; more specifically, we extend $\alpha'$ into $\alpha$ by assigning pruned variables in the same way as their equivalent counterparts.
\end{proof}

\begin{proof}[Proof of Theorem~\ref{thm:main}]
  The algorithm proceeds in three steps. First, it applies
  Lemma~\ref{lem:fullprune} to reduce the instance $I$ into an
  ``equivalent'' instance $I'$ containing at most
  $e_1$ variables in time $\bigO{|I|d_1^2\cdot e_1!e_1}$; in particular, a
  solution $\alpha$ of $I$ can be computed in linear time from a
  solution $\alpha'$ of $I'$. Second, it uses Theorem~\ref{thm:pilp} to
  compute a solution $\alpha'$ of $I'$ in time at most
  $\bigoh(e_1^{2.5e_1+o(e_1)}\cdot |I'|)$;
  because $e_1$ and $d_1$ are bounded by our
  parameters, the whole algorithm runs in FPT time. Third, it transforms the
  solution $\alpha'$ into a solution $\alpha$ of $I$. 
  Correctness follows from Lemma~\ref{lem:fullprune} and Theorem~\ref{thm:pilp}.
\end{proof}

\section{Lower Bounds and Hardness}
\label{sec:hardness}

In this section we will complement our algorithmic results 
by providing matching hardness results. Namely, we will show that already the 
\probfont{ILP-feasibility} problem
is \NP{}-hard on graphs of bounded treedepth and also \NP{}-hard on graphs of bounded treewidth and bounded maximum coefficient\footnote{Unless explicitly mentioned otherwise, all the presented \NP{}-hardness results hold in the strong sense, i.e., when the input is encoded in unary.}.

We begin by noting that \probfont{ILP-feasibility} remains \NP{}\hy 
hard even if the maximum absolute value of any coefficient 
is at most one.
This follows, e.g., by enhancing the standard reduction from the decision version of 
\probfont{Vertex Cover} (given a graph $G$ and a bound $\nu$, does $G$ admit a vertex cover of size at most $\nu$?) to \probfont{ILP-feasibility} as follows:
\begin{itemize}
\item add variables $x_1,\dots, x_\nu$ and force each of them to be $1$,
\item set $x=\sum_{i\in [\nu]}x_i$,
\item add a constraint requiring that the sum of all variables which represent
vertices of $G$ is at most $x$.
\end{itemize}
\begin{OBS}\label{obs:hard-coef}
  ILP-feasibility is \NP{}\hy hard even on instances with a maximum
  absolute value of every coefficient of $1$.
\end{OBS}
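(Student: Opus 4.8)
The plan is to give a polynomial-time reduction from the decision version of \probfont{Vertex Cover}, which is well known to be \NP{}-hard, and to ensure that every coefficient and right-hand side produced has absolute value at most one. Given a graph $G=(V,E)$ together with a bound $\nu$, I would introduce one variable $y_v$ for each $v\in V$, constrain it to the range $\{0,1\}$ via the pair $y_v\leq 1$ and $-y_v\leq 0$, and encode the covering requirement by adding, for every edge $\{u,v\}\in E$, the inequality $-y_u-y_v\leq -1$ (equivalently $y_u+y_v\geq 1$). A feasible assignment of these variables corresponds precisely to picking the vertex set $C=\{v : y_v=1\}$ that hits every edge, and all numbers introduced so far lie in $\{-1,0,1\}$.

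The only remaining ingredient is the cardinality requirement $\sum_{v\in V}y_v\leq \nu$, and this is exactly where the main obstacle lies: written directly, this inequality has right-hand side $\nu$, which may be arbitrarily large and would violate the bound $\ell=1$ (recall that the definition of $\ell$ counts the right-hand sides $b_A$ as well, not only the variable coefficients). To circumvent this, I would follow the idea sketched before the statement and replace the large constant by a sum of unit-valued variables: introduce auxiliary variables $x_1,\dots,x_\nu$, each pinned to the value $1$ through the pair $x_i\leq 1$ and $-x_i\leq -1$; add a variable $x$ together with $x-\sum_{i\in[\nu]}x_i\leq 0$ and $-x+\sum_{i\in[\nu]}x_i\leq 0$, which together enforce $x=\sum_{i}x_i=\nu$; and finally replace the cardinality constraint by $\sum_{v\in V}y_v-x\leq 0$. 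Every inequality now has coefficients and right-hand side of absolute value at most one, as required.

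Correctness I would check in both directions and expect it to be routine. If $G$ admits a vertex cover $C$ with $|C|\leq\nu$, then setting $y_v=1$ iff $v\in C$ and $x_i=1$ for all $i$ (so $x=\nu$) satisfies every edge inequality and gives $\sum_v y_v=|C|\leq\nu=x$, hence a feasible assignment; conversely, any feasible assignment forces $x_i=1$, thus $x=\nu$ and $y_v\in\{0,1\}$, so $C=\{v:y_v=1\}$ covers every edge and has size $\sum_v y_v\leq x=\nu$. The construction uses $|V|+\nu+1$ variables and $\bigO{|V|+|E|+\nu}$ inequalities and is plainly computable in polynomial time, which establishes \NP{}-hardness with maximum absolute coefficient one. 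Since every number appearing in the output instance is bounded by one, the argument is insensitive to unary versus binary encoding, so the hardness also holds in the strong sense.
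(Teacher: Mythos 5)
Your proof is correct and takes essentially the same approach as the paper: the standard reduction from the decision version of \probfont{Vertex Cover}, where the large right-hand side $\nu$ of the cardinality constraint is avoided by introducing auxiliary variables $x_1,\dots,x_\nu$ pinned to $1$, a variable $x$ equal to their sum, and the constraint $\sum_{v}y_v\leq x$. You merely spell out the details (domain constraints, edge constraints, and both directions of correctness) that the paper leaves as a sketch, and your closing remark on strong \NP{}\hy hardness matches the paper's footnote.
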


To simplify the constructions in the hardness proofs, we will often
talk about constraints as equalities instead of inequalities. Clearly,
every equality can be written in terms of two inequalities.

\begin{THE}\label{the:hard-treedepth}
  \probfont{ILP-feasibility} is \NP{}\hy hard
even on instances of bounded treedepth.
\end{THE}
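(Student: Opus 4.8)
The plan is to give a polynomial reduction from \probfont{3-SAT} to \probfont{ILP-feasibility} that outputs instances whose primal graph has treedepth at most a fixed constant and all of whose coefficients are bounded by a polynomial in the input size; this establishes \NP-hardness on bounded-treedepth instances (equivalently, \paraNP-hardness via the characterization recalled in the Preliminaries), and the polynomial bound on the coefficients makes the hardness hold in the strong sense. The central design decision is forced by an obstruction: any single inequality ranging over $t$ variables turns those variables into a clique of the primal graph and hence forces treedepth at least $t$. Consequently no constraint may mention more than a constant number of variables, so one cannot ``add up'' unboundedly many variables in a single constraint. This rules out the naive encodings of summation problems such as \probfont{Subset Sum} or \probfont{Partition}, and instead suggests encoding an entire truth assignment into the residues of \emph{one} global integer variable and expressing the clauses as local modular conditions on that variable.

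Concretely, I would introduce a single global variable $X$ together with the first $n$ primes $p_1,\dots,p_n$, one per Boolean variable $v_1,\dots,v_n$. For each $j$ I attach a \emph{variable gadget} consisting of fresh variables $q_j,r_j$ and the constraints $X=p_jq_j+r_j$, $0\le r_j\le 1$, and $X\ge 0$; this forces $r_j=X \bmod p_j\in\{0,1\}$, so the residues of $X$ modulo the $p_j$ encode a truth assignment (reading $r_j=1$ as ``$v_j$ true''). For each clause $C$ on variables $v_a,v_b,v_c$ I attach a \emph{clause gadget}: with $P=p_ap_bp_c$ I compute, via the Chinese Remainder Theorem and taking literal polarities into account, the unique residue $\rho_C \bmod P$ corresponding to the single assignment of $v_a,v_b,v_c$ that falsifies $C$, and I enforce $X\not\equiv \rho_C \pmod P$. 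This disequality is realized by fresh variables $q_C,r_C$ and a $0/1$ selector $s_C$ via $X=Pq_C+r_C$, $0\le r_C\le P-1$, together with the big-$M$ pair $r_C\le (\rho_C-1)+Ps_C$ and $r_C\ge (\rho_C+1)-P(1-s_C)$, which (taking $M=P$) is feasible exactly when $r_C\neq\rho_C$.

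Correctness follows from the Chinese Remainder Theorem: a feasible $X$ induces residues $r_j\in\{0,1\}$, hence a truth assignment $\tau$, and since $X \bmod P$ is determined by $(X\bmod p_a,X\bmod p_b,X\bmod p_c)=(\tau(v_a),\tau(v_b),\tau(v_c))$, the clause gadget is satisfiable precisely when $\tau$ does not match the falsifying pattern of $C$, i.e.\ when $\tau$ satisfies $C$; conversely any satisfying $\tau$ determines a residue class modulo $\prod_j p_j$, from which a concrete feasible $X$ is obtained (its \emph{value} may be very large, but this never appears as a coefficient). For the treedepth bound I would invoke Proposition~\ref{prop:alternativetd}: the primal graph is connected and every gadget variable is adjacent to $X$, so deleting $X$ leaves a disjoint union of gadgets, each a path on at most three vertices and hence of treedepth $2$; therefore the whole primal graph has treedepth at most $3$. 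Finally, every coefficient is one of $\pm1$, a prime $p_j=\bigO{n\log n}$, or a product $P=p_ap_bp_c=\bigO{\operatorname{poly}(n)}$, so the instance has polynomial size even under a unary encoding.

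The main obstacle is not any single gadget but reconciling two competing demands: bounded treedepth (which, by the clique obstruction above, forbids aggregating information across gadgets) while still faithfully coupling the clauses — an inherently global condition — to the individual variables. Routing this coupling through the residues of one shared variable is precisely what makes constant treedepth achievable, and the accompanying subtlety is to keep the moduli, and hence all coefficients, polynomially bounded so that the hardness is strong; this is exactly the ``more careful approach'' alluded to in the introduction, which a reduction aiming only for weak hardness could sidestep by allowing exponentially large coefficients. I would therefore concentrate the care on the CRT bookkeeping that fixes each $\rho_C$ correctly and on verifying the polynomial size bound, treating the treedepth computation via Proposition~\ref{prop:alternativetd} as routine.
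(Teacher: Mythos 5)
Your proof is correct, and it rests on the same core technique as the paper's: encode a global combinatorial object into the residues of ``hub'' integer variable(s) modulo distinct polynomially bounded primes, realize each modular condition by a Euclidean-division gadget ($X=pq+r$ with $r$ range-restricted, plus indicator variables), and bound treedepth via Proposition~\ref{prop:alternativetd} by deleting the hub(s) so that only constant-size gadget components remain. The differences are in the route: the paper reduces from \probfont{$3$-Colorability} using \emph{three} global variables $g_1,g_2,g_3$ (one per color class), with gadgets that only test divisibility ($u_{i,j}=0$ iff $p(i)\mid g_j$) combined through ``exactly one color'' and ``not both endpoints'' counting constraints, yielding treedepth at most $8$ and coefficients $\bigoh(n\log n)$; you reduce from \probfont{3-SAT} using a \emph{single} global variable $X$, restrict its residues directly to $\{0,1\}$, and handle each clause by a CRT-combined modulus $P=p_ap_bp_c$ with a big-$M$ selector enforcing the disequation $X\not\equiv\rho_C\pmod P$. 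Your version buys a better constant (treedepth $3$ versus $8$) and a single hub variable, at the price of CRT bookkeeping, a disequality gadget, and larger (though still polynomial, $\bigoh(n^3\log^3 n)$) coefficients; the paper's version keeps every modulus a single prime and every gadget a plain divisibility test. Both yield strong \NP{}\hy hardness since all coefficients stay polynomial. Two minor points to tidy up: assume each clause has three distinct variables (standard for \probfont{3-SAT}), and note explicitly that in the constraint $X=p_jq_j+r_j$ with $0\le r_j\le 1$ the pair $(q_j,r_j)$ is uniquely determined by $X$ only when a feasible solution exists, which is all you need --- feasibility forces $X\bmod p_j\in\{0,1\}$, which is the intended effect.
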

\begin{proof}
  We will show the theorem by a polynomial-time reduction from the
  well-known \NP-hard \probfont{$3$-Colorability}
  problem~\cite{GareyJohnson79}: given a graph, decide whether the vertices of $G$
  can be colored with three colors such that no two adjacent vertices
  of $G$ share the same color. 
  
  The main idea behind the reduction is
  to represent a $3$-partition of the vertex set of $G$ (which in
  turn represents a $3$-coloring of $G$) by the domain values of
  three ``global'' variables. The value of each of these global variables
  will represent a subset of vertices of $G$ that will be colored using
  the same color. To represent a subset of the vertices of $G$ in
  terms of domain values of the global variables, we will represent
  every vertex of $G$ with a unique prime number and a subset by the
  value obtained from the multiplication of all prime numbers of
  vertices contained in the subset. To ensure that the subsets
  represented by the global variables correspond to a valid
  $3$-partition of $G$ we will introduce constraints which ensure that:
  \begin{itemize}
  \item[C1]  For every prime number representing some vertex of $G$
    exactly one of the global variables is divisible by that prime
    number. This ensures that every vertex of $G$ is assigned to
    exactly one color class.
  \item[C2] For every edge $\{u,v\}$ of $G$ it holds that no global
    variable is divisible by the prime numbers
    representing $u$ and $v$ at the same time. This ensures that no
    two adjacent vertices of $G$ are assigned to the same color class.
  \end{itemize}
  Thus let $G$ be the given instance of \probfont{$3$-Coloring} and
  assume that the vertices of $G$ are uniquely identified as elements of $\{1,\dots,|V(G)|\}$.
  In the following we denote by $p(i)$ the
  $i$-th prime number for any positive integer $i$, where $p(1)=2$.
  We construct an instance $I$ of
  \probfont{ILP-feasibility} in polynomial-time with treedepth at
  most $8$ and coefficients bounded by a polynomial in $V(G)$ such that $G$ has a
  $3$\hy coloring if and only if $I$ has a feasible assignment.
This instance $I$ has the following variables:
  \begin{itemize}
  \item The \emph{global variables} $g_1$, $g_2$, and $g_3$ with
    an arbitrary positive domain, whose
    values will represent a valid $3$-Partioning of $V(G)$.
  \item For every $i$ and $j$ with $1 \leq i \leq |V(G)|$ and $1 \leq
    j \leq 3$, the variables $m_{i,j}$ (with an arbitrary non-negative domain),
    $r_{i,j}$ (with domain between $0$ and $p(i)-1$), and
    $u_{i,j}$ (with binary domain). These variables are used to secure
    condition C1.
  \item For every $e \in E(G)$, $v \in e$, and $j$ with $1 \leq
    j \leq 3$, the variables $m_{e,v,j}$ (with an arbitrary non-negative domain),
    $r_{e,v,j}$ (with domain between $0$ and $p(v)-1$), and
    $u_{e,v,j}$ (with binary domain). These variables are used to secure condition C2.
  \end{itemize}

  \noindent $I$ has the following constraints (in the following let $\alpha$ be
  any feasible assignment of $I$):
  \begin{itemize}
  \item Constraints that restrict the domains of all variables as
    specified above, i.e.:
    \begin{itemize}
    \item  for every $i$ and $j$ with $1 \leq i \leq
      |V(G)|$ and $1 \leq j \leq 3$, the constraints $g_j \geq 0$,
      $m_{i,j} \geq 0$, $0 \leq r_{i,j}\leq p(i)-1$, and $0\leq u_{i,j}
      \leq 1$.
    \item for every $e\in E(G)$, $v \in e$, and $j$ with $1 \leq j
      \leq 3$, the constraints $m_{e,v,j}\geq 0$, $0\leq r_{e,v,j}\leq
      p(v)-1$, and $0 \leq u_{e,v,j}\leq 1$.
    \end{itemize}
  \item The following constraints, introduced for each $1 \leq i
      \leq |V(G)|$ and $1 \leq j \leq 3$, together guarantee that condition C1 holds:
    \begin{itemize}
    \item Constraints that ensure that $\alpha(r_{i,j})$
      is equal to the remainder of $\alpha(g_j)$ divided by $p(i)$, i.e., the constraint 
      $g_j=p(i)m_{i,j}+r_{i,j}$.
    \item
      Constraints that ensure that $\alpha(u_{i,j})=0$ 
      if and only if $\alpha(r_{i,j})=0$, i.e., the
      constraints $u_{i,j}\leq r_{i,j}$ and $r_{i,j} \leq
      (p(i)-1)u_{i,j}$.
      Note that together the above constraints now ensure that
      $\alpha(u_{i,j})=0$ if and only if $g_j$ is divisible by $p(i)$.
    \item 
      Constraints that ensure that exactly one of $\alpha(u_{i,1})$,
      $\alpha(u_{i,2})$, and $\alpha(u_{i,3})$ is equal to $0$, i.e.,
      the constraints $2 \leq u_{i,1}+u_{i,2}+u_{i,3}\leq 2$.
      Note that together all the above constraints now ensure condition
      C1 holds.
    \end{itemize}
  \item The following constraints, introduced for each $1 \leq j \leq 3$, together guarantee that condition C2 holds:
    \begin{itemize}
    \item Constraints that ensure that for every $e \in E(G)$ and $v \in e$,
      it holds that $\alpha(r_{e,v,j})$
      is equal to the remainder of $g_j$ divided by $p(v)$, i.e., the constraint 
      $g_j=p(i)m_{e,v,j}+r_{e,v,j}$.
    \item
      Constraints that ensure that for every $e \in E(G)$, $v \in e$,
      and $j$ with $1 \leq j \leq 3$ it holds that
      $\alpha(u_{e,v,j})=0$ if and only if $\alpha(r_{e,v,j})=0$, i.e., the
      constraints $u_{e,v,j}\leq r_{e,v,j}$ and $r_{e,v,j} \leq
      p(v)u_{e,v,j}$.
      Note that together the above constraints now ensure that
      $\alpha(u_{e,v,j})=0$ if and only if $g_j$ is divisible by $p(v)$.
    \item 
      Constraints that ensure that for every $e=\{v,w\} \in E(G)$
      and $j$ with $1 \leq j \leq 3$ it holds that at least one of
      $\alpha(u_{e,w,j})$ and $\alpha(u_{e,v,j})$ is non-zero, i.e., the
      constraint $u_{e,u,j}+u_{e,v,j}\geq 1$. Note that together with
      all of the above constraints this now ensures condition C2.
    \end{itemize}
  \end{itemize}
  This completes the construction of $I$ and the largest coefficient used in $I$ is $p(|V(G)|)$.
  It is well-known that $p(i)$ is upper-bounded
  by $\bigoh(i\log i)$ due to the Prime Number Theorem, and so 
  this in particular implies that the numbers which occur in $I$ are
  bounded by a polynomial in $|V(G)|$. Hence $I$ can be constructed in polynomial time.
  
  Following the construction and explanations provided above, it is not difficult to see that $I$ has a
  feasible assignment if and only if $G$ has a $3$-coloring. Indeed, for any $3$-coloring of $G$, one can construct a feasible assignment of $I$ by computing the prime-number encoding for the vertex sets that receive colors $1,2,3$ and assign these three numbers to $g_1,g_2,g_3$, respectively. Such an assignment allows us to straightforwardly satisfy the constraints ensuring C1 holds (since each prime occurs in exactly one global constraint), the constraints ensuring C2 holds (since each edge is incident to at most one of each color) while maintaining the domain bounds. 
  
  On the other hand, for any feasible assignment $\alpha$, clearly each of $\alpha(g_1), \alpha(g_2), \alpha(g_3)$ will be divisible by some subset of prime numbers between $2$ and $p(|V(G)|)$. In particular, since $\alpha$ is feasible it follows from the construction of our first group of constraints that each prime between $2$ and $p(|V(G)|)$ divides precisely one of $\alpha(g_1), \alpha(g_2), \alpha(g_3)$, and so this uniquely encodes a corresponding candidate $3$-coloring for the vertices of the graph. Finally, since $\alpha$ also satisfies the second group of constraints, this candidate $3$-coloring must have the property that each edge is incident to exactly $2$ colors, and so it is in fact a valid $3$-coloring.
  


  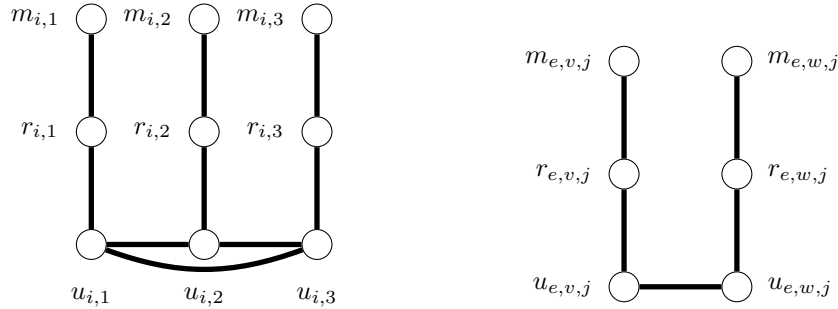
\begin{figure}[tb]
    \centering
    \begin{tikzpicture}[]
      \tikzstyle{every node}=[circle, inner sep=4pt,draw, node distance=1.5cm]
      \tikzstyle{every edge}=[draw, line width=2pt]
      \draw
      node[label=left:$m_{i,1}$] (m1) {} 
      node[right of=m1, label=left:$m_{i,2}$] (m2) {}
      node[right of=m2, label=left:$m_{i,3}$] (m3) {}

      node[below of=m1, label=left:$r_{i,1}$] (r1) {}
      node[right of=r1, label=left:$r_{i,2}$] (r2) {}
      node[right of=r2, label=left:$r_{i,3}$] (r3) {}

      node[below of=r1, label=below:$u_{i,1}$] (u1) {}
      node[right of=u1, label=below:$u_{i,2}$] (u2) {}
      node[right of=u2, label=below:$u_{i,3}$] (u3) {} 
      ;
      
      \draw
      (m1) edge (r1)
      (r1) edge (u1)
      (m2) edge (r2)
      (r2) edge (u2)
      (m3) edge (r3)
      (r3) edge (u3)
      (u1) edge (u2)
      (u2) edge (u3)
      (u1) edge[bend right=20] (u3)
      ;
    \end{tikzpicture}\hspace{20mm}
    \begin{tikzpicture}
      \tikzstyle{every node}=[circle, inner sep=4pt,draw, node distance=1.5cm]
      \tikzstyle{every edge}=[draw, line width=2pt]
      \draw
      node[label=left:$m_{e,v,j}$] (mv) {} 
      node[right of=mv, label=right:$m_{e,w,j}$] (mw) {}

      node[below of=mv, label=left:$r_{e,v,j}$] (rv) {}
      node[right of=rv, label=right:$r_{e,w,j}$] (rw) {}

      node[below of=rv, label=left:$u_{e,v,j}$] (uv) {}
      node[right of=uv, label=right:$u_{e,w,j}$] (uw) {}
      ;
      
      \draw
      (mv) edge (rv)
      (rv) edge (uv)
      (mw) edge (rw)
      (rw) edge (uw)
      (uv) edge (uw)
      ;
    \end{tikzpicture}
  
  \caption{Illustration of a vertex-type component (left) and an edge-type component (right) in the graph $G_I\setminus\{g_1,g_2,g_3\}$.}
  \label{fig:hard-td-vert-edge}
\end{figure}

  It remains to show that the treedepth of $I$ is at most $8$.
  We will show this by using the characterization of treedepth given in Proposition~\ref{prop:alternativetd}.
  We first observe that the graph $G_I \setminus
  \{g_1,g_2,g_3\}$ consists of the following components:
  \begin{itemize}
  \item for every $i$ with $1 \leq i \leq |V(G)|$, 
    one component on the vertices $m_{i,1},\dotsc, m_{i,3}$,
    $r_{i,1},r_{i,2},r_{i,3}$, $u_{i,1},u_{i,2},u_{i,3}$. 
    Note that all of these components are isomorphic to each other and
    we will therefore in the following refer to these components as \emph{vertex-type
      components}.
  \item for every $e=\{w,v\} \in E(G)$ and $j$ with $1 \leq j \leq 3$,
    one component on the vertices $m_{e,w,j}$, $m_{e,v,j}$,
    $r_{e,w,j}$, $r_{e,v,j}$, $u_{e,w,j}$, and $u_{e,v,j}$.
    Note that all of these components are isomorphic to each other and
    we will therefore in the following refer to these components as \emph{edge-type
      components}.
  \end{itemize}
  The two types of components are illustrated in Figure~\ref{fig:hard-td-vert-edge}.
  We will show next that any vertex-type component has treedepth at
  most $5$ and every edge-type component has treedepth at most
  $4$. This would then imply that $G_I$ has treedepth at most $8$ (since it suffices to remove the vertices $\{g_1,g_2,g_3\}$ in order to decompose the graph into these components).
  Hence let $i$ with $1 \leq i \leq |V(G)|$ and consider the
  vertex-type component $C_i$ on the vertices $m_{i,1},m_{i,2}, m_{i,3}$,
  $r_{i,1},r_{i,2},r_{i,3}$, $u_{i,1},u_{i,2},u_{i,3}$. Note that
  $C_i \setminus \{u_{i,1},u_{i,2},u_{i,3}\}$ consists of one component
  for every $j$ with $1 \leq j \leq 3$ that contains the vertices
  $m_{i,j}$ and $r_{i,j}$. Clearly each of these three components has
  treedepth at most $2$ and hence the treedepth of $C_i$ is at most
  $2+3=5$, as required.

  In order to show that every edge-type component has treedepth at
  most $4$, consider an edge $e=\{w,v\} \in E(G)$ and some $j$ satisfying $1\leq
  j \leq 3$. Let
  $C_{e,j}$ be the edge-type component consisting of the vertices 
  $m_{e,w,j}$, $m_{e,v,j}$, $r_{e,w,j}$, $r_{e,v,j}$, $u_{e,w,j}$, and
  $u_{e,v,j}$. Note that $C_{e,j} \setminus \{u_{e,w,j},u_{e,v,j}\}$
  consists of two components, one containing the vertices $m_{e,w,j}$
  and $r_{e,u,j}$ and one containing the vertices $m_{e,v,j}$ and
  $r_{e,v,j}$. Clearly, each of these two components has treedepth at
  most $2$ and hence the treedepth of $C_{e,j}$ is at most $2+2=4$,
  as required.
\end{proof}

The next theorem shows that \probfont{ILP-feasibility} is \paraNP{}\hy
hard parameterized by both treewidth and the maximum absolute value of
any number in the instance; observe that since we are bounding all
numbers in the instance, the theorem in particular implies
\NP{}\hy hardness. We note that the idea to reduce from \textsc{Subset Sum} was inspired by previous work of Jansen and Kratsch~\cite{JansenKratsch15esa}.
\begin{THE}\label{the:para-hard-tw-coeff}
  \probfont{ILP-feasibility} is \NP{}\hy hard even
  on instances with treewidth at most two and where the maximum absolute value of any
  coefficient is at most one.
\end{THE}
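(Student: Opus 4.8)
The plan is to give a polynomial-time reduction from the \NP-complete problem \textsc{Subset Sum}, following the lead of Jansen and Kratsch. An instance of \textsc{Subset Sum} consists of positive integers $a_1,\dots,a_n$ (given in binary, say with $B$ bits) and a target $t$, and asks whether some $S\subseteq\{1,\dots,n\}$ satisfies $\sum_{i\in S}a_i=t$. The naive encoding as a single equality $\sum_i a_i x_i = t$ with $x_i\in\{0,1\}$ has treewidth~$1$ but uses the large coefficients $a_i$ and the large right-hand side $t$; since $\ell(I)$ bounds right-hand sides as well, the entire difficulty is to realise the same arithmetic with all coefficients \emph{and} all right-hand sides confined to $\{-1,0,1\}$. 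I would achieve this by spreading each large number out over polynomially many variables using \emph{doubling gadgets}, and by accumulating the selected values one at a time along a path, so that the primal graph stays essentially one-dimensional.

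Concretely, for each item $i$ I introduce a binary selector $x_i$ (enforced by $0\le x_i\le 1$) and a doubling chain $w_{i,0}=x_i,\,w_{i,1},\dots,w_{i,B}$ realising $w_{i,p}=2^p x_i$; since the coefficient $2$ is forbidden I implement each doubling by an auxiliary variable $w_{i,p}'$ together with the two unit-coefficient equalities $w_{i,p}'=w_{i,p-1}$ and $w_{i,p}=w_{i,p-1}+w_{i,p}'$. A single variable $v_i:=\sum_{p:\,a_{i,p}=1}w_{i,p}$ (again with coefficients in $\{0,1\}$, where $a_{i,p}$ is the $p$-th bit of $a_i$) then equals $a_i x_i$. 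I accumulate these values along a spine $S_0,\dots,S_n$ via $S_0=0$ and the unit-coefficient equalities $S_i=S_{i-1}+v_i$, so that $S_n=\sum_{i\in S}a_i$. Finally I build the target in the same way: a variable $u_0$ forced to equal $1$, a doubling chain producing $u_p=2^p$, a variable $t^\ast:=\sum_{p:\,t_p=1}u_p=t$, and the closing equality $S_n=t^\ast$. Every constraint is an equality (hence two inequalities) whose coefficients and right-hand side lie in $\{-1,0,1\}$, and there are $O(nB)$ variables and constraints, so the reduction runs in polynomial time; as all numbers are bounded by $1$, the instance size is polynomial even under unary encoding, giving \NP-hardness in the strong sense.

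Correctness is straightforward: the equalities force $w_{i,p}=2^p x_i$ and hence $v_i\in\{0,a_i\}$ according to $x_i$, force $t^\ast=t$, and yield $S_n=\sum_i v_i$; thus a feasible assignment exists if and only if some subset of the $a_i$ sums to $t$. The step requiring genuine care — and the crux of the theorem — is bounding the treewidth by~$2$, and I would argue this via the block structure of the primal graph. Each doubling step contributes a ``diamond'' on $\{v_i,w_{i,p-1},w_{i,p},w_{i,p}'\}$ (a $K_4$ minus the edge $v_i w_{i,p}'$, which is absent precisely because $w_{i,p}'$ co-occurs only with $w_{i,p-1}$ and $w_{i,p}$), and an entire item gadget is a fan with apex $v_i$ over the path $w_{i,0},\dots,w_{i,B}$, which is series-parallel and hence has treewidth~$2$. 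The spine together with the apexes $v_i$ forms a chain of triangles $\{S_{i-1},S_i,v_i\}$, again of treewidth~$2$. These pieces meet only at the cut vertices $v_i$ (and the target gadget at $S_n$), so they are distinct blocks of the primal graph; since gluing blocks at cut vertices does not increase treewidth, the whole graph has treewidth at most~$2$ — and exactly~$2$ because it contains triangles.

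I expect the main obstacle to be exactly this treewidth bound: one must verify that hanging a doubling fan off each spine vertex does not create a $K_4$ minor, which is what the auxiliary variables $w_{i,p}'$ buy us — they eliminate the forbidden coefficient $2$ while keeping each local piece a diamond rather than a $K_4$. I would make the argument fully rigorous by exhibiting an explicit width-$2$ tree decomposition: a path of bags $\{S_{i-1},S_i,v_i\}$ along the spine, each $v_i$ expanded into a path of bags $\{v_i,w_{i,p-1},w_{i,p}\}$ carrying leaf bags $\{w_{i,p-1},w_{i,p}',w_{i,p}\}$, with an analogous gadget attached at $S_n$ for the target.
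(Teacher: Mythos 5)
Your construction contains a genuine gap, and it sits exactly at the step you identified as the crux. The constraint $v_i=\sum_{p:\,a_{i,p}=1}w_{i,p}$ is a \emph{single} inequality of arity up to $B+1$, and in the primal graph (as defined in Subsection~\ref{sub:ilp}) any two variables occurring in a common constraint are adjacent. Hence this constraint does not create a fan with apex $v_i$ over the path $w_{i,0},\dots,w_{i,B}$; it creates a \emph{clique} on $\{v_i\}\cup\{w_{i,p}\,:\,a_{i,p}=1\}$. A clique on $c$ vertices forces treewidth at least $c-1$, so the treewidth of your instance is roughly the number of set bits of $a_i$, which grows with the input size rather than being bounded by $2$. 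Concretely, your proposed decomposition with bags $\{v_i,w_{i,p-1},w_{i,p}\}$ fails to cover the edge between, say, $w_{i,0}$ and $w_{i,B}$ when both bits are set. The same problem occurs in your target gadget with the constraint $t^\ast=\sum_{p:\,t_p=1}u_p$. Note that your doubling steps themselves are fine: $w_{i,p}'=w_{i,p-1}$ and $w_{i,p}=w_{i,p-1}+w_{i,p}'$ have arity at most $3$ and only create triangles.

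The repair is to do for the bit-summation exactly what you already did for doubling: never write a sum with more than two summands, but instead accumulate bit by bit along a chain of partial-sum variables. This is what the paper's construction does: inside each gadget it introduces variables $z_0,\dots,z_m$ with constraints $z_{p+1}=h_{p+1}+z_p$ when bit $p+1$ of the item value is set and $z_{p+1}=z_p$ otherwise (and $z_0=h_0+x$ or $z_0=x$), so every constraint has arity at most $3$ and the primal graph remains a chain of triangles of treewidth $2$. With this change your reduction becomes essentially the paper's: doubling chains with auxiliary copies to avoid the coefficient $2$ (the paper's $h_i,h_i'$), bit-by-bit accumulation (the paper's $z_i$), item gadgets glued along a spine $y_1,\dots,y_n$ realizing $\alpha(y_i)\in\{\alpha(y_{i-1}),\alpha(y_{i-1})+q_i\}$, and a final gadget forcing $y_n=r$ by fixing $h_0=1$ instead of $0\leq h_0\leq 1$. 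The rest of your argument (polynomial size, coefficients in $\{-1,0,1\}$, strong \NP{}\hy hardness since the produced instance has polynomial unary size) is sound once the arity issue is fixed.
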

\begin{proof}
  We show the result by a polynomial reduction from the
  \probfont{Subset Sum} problem, which is well-known to be weakly \NP{}\hy
  complete. 
    \pbDef{\probfont{Subset Sum}}
  {A set $Q:=\{q_1,\dotsc,q_n\}$ of integers and an integer $r$.}
  {Is there a subset $Q' \subseteq Q$ such that $\sum_{q'\in Q'}q'=r$?}
  Let $I:=(Q,r)$ with $Q:=\{q_1,\dotsc,q_n\}$ be an instance of
  \probfont{Subset Sum}, which we assume to be given in binary encoding.
  We will construct an instance $I'$ of \probfont{ILP-feasibility}
  equivalent to $I$ in polynomial-time 
  (with respect to the input size of $I$)
  with treewidth at most $2$ that
  uses only $-1$, $0$, and $1$ as coefficients.   Crucial to our construction are the following auxiliary ILP instances.
  
  \begin{figure}[tb]
    \centering
    \begin{tikzpicture}[node distance=1.7cm]
      \tikzstyle{ver}=[draw, circle, inner sep=4pt]
      \tikzstyle{every edge}=[draw,line width=2pt]
      \tikzstyle{nd}=[circle, inner sep=1pt,draw]

      \draw

      node[ver,label=above:$x$] (x) {}
      node[ver, right of=x, label=above:$z_{0}$] (zi0) {}
      node[ver, right of=zi0, label=above:$z_{1}$] (zi1) {}
      node[ver, right of=zi1, label=above:$z_{2}$] (zi2) {}

      node[nd, node distance=0.5cm, right of=zi2] (dz1) {}
      node[nd, node distance=0.25cm, right of=dz1] (dz2) {}
      node[nd, node distance=0.25cm, right of=dz2] (dz3) {}
      
      node[ver, node distance=0.5cm,right of=dz3, label=above:$z_{m-2}$] (zi3) {}
      node[ver, right of=zi3, label=above:$z_{m-1}$] (zi4) {}
      node[ver, right of=zi4, label=above:$z_{m}$] (zi5) {}
      node[ver, right of=zi5, label=above:$y$] (y) {}

      node[ver,below of=zi0,label=below right:$h_{0}$] (bi0) {}
      node[ver, right of=bi0, label=below right:$h_{1}$] (bi1) {}
      node[ver, right of=bi1, label=below right:$h_{2}$] (bi2) {}

      node[nd, node distance=0.5cm, right of=bi2] (db1) {}
      node[nd, node distance=0.25cm, right of=db1] (db2) {}
      node[nd, node distance=0.25cm, right of=db2] (db3) {}

      node[ver, node distance=0.5cm, right of=db3, label=below right:$h_{m-2}$] (bi3) {}
      node[ver, right of=bi3, label=below right:$h_{m-1}$] (bi4) {}
      node[ver, right of=bi4, label=below right:$h_{m}$] (bi5) {}

      node[ver,below of=bi0,label=below:$h_{0}'$] (bi0p) {}
      node[ver, right of=bi0p, label=below:$h_{1}'$] (bi1p) {}
      node[ver, right of=bi1p, label=below:$h_{2}'$] (bi2p) {}

      node[nd, node distance=0.5cm, right of=bi2p] (db1p) {}
      node[nd, node distance=0.25cm, right of=db1p] (db2p) {}
      node[nd, node distance=0.25cm, right of=db2p] (db3p) {}

      node[ver, node distance=0.5cm, right of=db3p,
      label=below:$h_{m-2}'$] (bi3p) {}
      node[ver, right of=bi3p, label=below:$h_{m-1}'$] (bi4p) {}
            
      ;
      
      \draw
      (x) edge (zi0)
      (x) edge (bi0)
      (bi0) edge (bi1)
      (bi1) edge (bi2)
      (bi3) edge (bi4)
      (bi4) edge (bi5)
      
      (zi0) edge (zi1)
      (zi1) edge (zi2)
      (zi3) edge (zi4)
      (zi4) edge (zi5)
      
      (bi1) edge (zi0)
      (bi2) edge (zi1)
      (bi4) edge (zi3)
      (bi5) edge (zi4)
      
      (bi0) edge (zi0)
      (bi1) edge (zi1)
      (bi2) edge (zi2)
      (bi3) edge (zi3)
      (bi4) edge (zi4)
      (bi5) edge (zi5)

      (bi0) edge (bi0p)
      (bi1) edge (bi1p)
      (bi2) edge (bi2p)
      (bi3) edge (bi3p)
      (bi4) edge (bi4p)

      (bi0p) edge (bi1)
      (bi1p) edge (bi2)
      (bi3p) edge (bi4)
      (bi4p) edge (bi5)
      
      (y) edge (zi5)
      ;
    \end{tikzpicture}
    \caption{Illustration of the primal graph of the instance $I(q,x,y)$.}
    \label{fig:hard-tw-coeff}
  \end{figure}
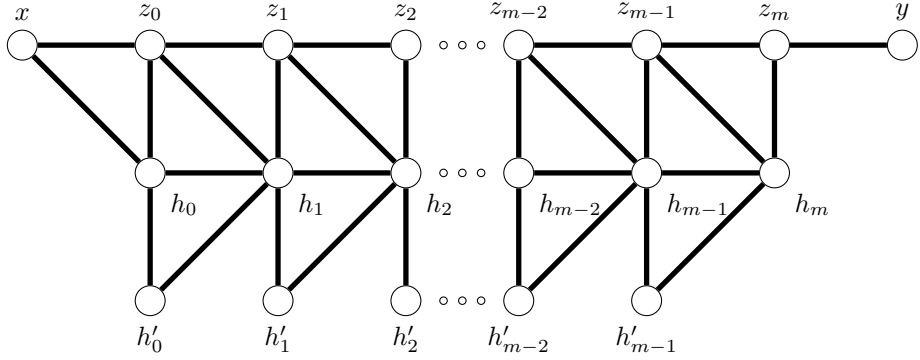

    \begin{figure}[tb]
    \centering
    \begin{tikzpicture}[node distance=1.5cm]
      \tikzstyle{box}=[draw, rectangle, inner sep=3pt, minimum height=1cm]
      \tikzstyle{ver}=[draw, circle, fill,inner sep=3pt]
      \tikzstyle{every edge}=[draw,line width=2pt]
      \tikzstyle{nd}=[circle, inner sep=1pt,draw]

      \draw
      node[box] (b1) {$I(q_1,y_1)$}
      (node cs:name=b1, anchor=east) +(0.5cm,0) node[ver, label=above:$y_1$] (y1) {}
      (y1) edge (node cs:name=b1, anchor=east)
      
      (y1) +(0.5cm,0) node[box, anchor=west] (b2) {$I(q_2,y_1,y_2)$}
      (node cs:name=b2, anchor=east) +(0.5cm,0) node[ver, label=above:$y_2$] (y2) {}
      (y1) edge (node cs:name=b2, anchor=west)
      (y2) edge (node cs:name=b2, anchor=east)

      node[nd, node distance=0.5cm, right of=y2] (d1) {}
      node[nd, node distance=0.25cm, right of=d1] (d2) {}
      node[nd, node distance=0.25cm, right of=d2] (d3) {}

      (d3) +(1cm,0) node[box, anchor=west] (b3)
      {$I(q_n,y_{n-1},y_n)$}
      (node cs:name=b3, anchor=west) +(-0.5cm,0) node[ver, label=above:$y_{n-1}$] (y3) {}
      (node cs:name=b3, anchor=east) +(0.5cm,0) node[ver, label=above:$y_n$] (y4) {}
      (y3) edge (node cs:name=b3, anchor=west)
      (y4) edge (node cs:name=b3, anchor=east)

      (y4) +(0.5cm,0) node[box, anchor=west] (b4)
      {$I_C(r,y_{n})$}
      (y4) edge (node cs:name=b4, anchor=west)

      ;
      
      \draw
      ;
    \end{tikzpicture}
    \caption{Illustration of the ILP instance $I'$.}
    \label{fig:hard-tw-coeff-2}
  \end{figure}
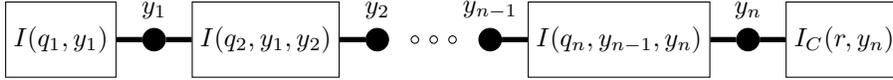
  \begin{CLM}\label{clm:hard-tw-coeff}
    For every $q \in \Nat$ and any two variables $x$ and $y$ there is an ILP instance $I(q,x,y)$
    satisfying the following conditions:
    \begin{itemize}
    \item[(P1)] $I(q,x,y)$ has at most $\bigoh(\log q)$ variables and
      constraints,
    \item[(P2)] the maximum absolute value of any coefficient in
      $I(q,x,y)$ is at most one,
    \item[(P3)] the treewidth of $I(q,x,y)$ is at most two and 
    \item[(P4)] for every feasible assignment
      $\alpha$ of $I(q,x,y)$, it holds that $\alpha(y)\in \{\alpha(x),\alpha(x)+q\}$.
    \end{itemize}
    Moreover, there are ILP instances $I(q,y)$ and $I_C(q,y)$
    satisfying (P1)--(P3) and additionally:
    \begin{itemize}
    \item $\alpha(y) \in \{0,q\}$ for $I(q,y)$, and
    \item $\alpha(y) =q$ for $I_C(q,y)$.
    \end{itemize}
  \end{CLM}
  \begin{proof}
    For an integer $q$, let $B(q)$ be the set of indices of all bits
    that are equal to one in the binary representation of $q$, i.e., we 
    have $q=\sum_{j \in B(q)}2^{j}$. Moreover, 
    let $m=b_{\max}(q)$ be the largest index in $B(q)$.

    We construct the ILP instance $I(q,x,y)$ as follows.
    We first introduce $m+1$ variables $h_0,\dotsc,h_m$ together
    with $m$ variables $h_0',\dotsc,h_{m-1}'$ and add the following
    constraints: $0\leq h_0\leq 1$, and
    for every $i$ with $0 \leq i < m$ we set
    $h_i'=h_i$ and $h_{i+1}=h_i+h_i'$.
    Observe that the above constraints ensure that
    $\alpha(h_{i})$ is equal to $2^i\alpha(h_0)$ for every $i$ with $0
    \leq i \leq m$ and every feasible assignment $\alpha$.
    We also introduce the new auxiliary variables
    $z_{0},\dotsc,z_{m}$
    together with the following constraints:
    \begin{itemize}
    \item If $0 \in B(q)$ then we add the constraint $z_{0}=h_{0}+x$, and
      otherwise we add the constraint $z_{0}=x$.
    \item For
      every $i$ with $0\leq i<m$, if $i+1 \in B(q)$ then we add the constraint
      $z_{i+1}=h_{i+1}+z_{i}$ and otherwise the constraint 
      $z_{i+1}=z_{i}$.
    \end{itemize}
    Observe that these constraints ensure that
    $\alpha(z_{i})$ is equal to $\alpha(x)+\sum_{j \in B(q)\land j \leq i}\alpha(h_{j})$
    for every $i$ with $0\leq i \leq m$ and any feasible
    assignment $\alpha$.
    Finally we introduce the constraint $y=z_m$.
    This concludes the construction of $I(q,x,y)$.
    By construction $I(q,x,y)$ satisfies (P1) and (P2).
    Moreover, because $\alpha(y)=\alpha(z_m)$ is equal
    to $q\alpha(h_0)+\alpha(x)$ for any
    feasible assignment $\alpha$ and since $\alpha(h_0)\in \{0,1\}$,
    we obtain that $\alpha(y) \in \{\alpha(x),\alpha(x)+q\}$ showing that $I(q,x,y)$
    satisfies (P4). Finally, with the help of Figure~\ref{fig:hard-tw-coeff},
    it is straightforward to verify that
    $I(q,x,y)$ has treewidth at most two.

    The ILP instance $I(q,y)$ can now be obtained from $I(q,x,y)$ by
    removing the variable $x$. Moreover, the ILP instance $I_C(q,y)$
    can now be obtained from $I(q,y)$ by replacing the
    constraints $0 \leq h_0 \leq 1$ with the constraint $h_0=1$.
  \end{proof}

  We now obtain $I'$ as the (non-disjoint) union of the instances
  $I(q_1,y_1)$, $I(q_{i},y_{i-1},y_{i})$ for every $i$ with $1< i \leq
  n$, and the instance $I_C(r,y_n)$ (see Figure~\ref{fig:hard-tw-coeff-2} for an
  illustration of $I'$). The size of each of these
  $n+1$ instances is bounded by $\bigoh(\log m)$, where $m$ is the maximum
  of $\{q_1,\dotsc,q_n,r\}$, and it can be verified that each of these instances can be constructed in time $\bigoh(\log m)$. Hence the construction of $I'$ from $I$ can be completed in polynomial time (with respect to the size of the binary encoding of $I$).   We also observe that the maximum
  absolute value of any coefficient in $I'$ is at most $1$. Finally,
  because $I'$ is a simple concatenation of ILP instances with
  treewidth at most $2$, it is straightforward to verify that $I'$ has
  treewidth at most $2$.  
\end{proof}

\section{Concluding Notes}
\label{sec:con}

We presented new results that add to the complexity landscape for
ILP w.r.t.\ structural parameterizations of the constraint matrix. 
Our main algorithmic result pushes the
frontiers of tractability for ILP instances and will hopefully
serve as a precursor for the study of further structural
parameterizations for ILP. We note that the running time of 
the presented algorithm has a highly nontrivial dependence on the treedepth
of the ILP instance, and hence the algorithm is unlikely to 
outperform dedicated solvers in practical settings.

The provided results draw
an initial complexity landscape for ILP
w.r.t.\ the most prominent decompositional width parameters. 
However, other approaches exploiting the structural properties of ILP
instances still remain unexplored and represent interesting directions for future research.
For instance, an adaptation of \emph{backdoors}~\cite{GaspersSzeider12} to the ILP setting could lead to highly relevant algorithmic results.


\paragraph{Acknowledgments} The authors acknowledge support by the Austrian Science Fund (FWF, project P26696). Robert Ganian is also affiliated with FI MU, Brno, Czech Republic. The authors also thank the anonymous reviewers for many insightful suggestions and comments---and in particular for helping improve Theorem~\ref{the:para-hard-tw-coeff}.

\bibliographystyle{aaai}
\bibliography{literature}

\begin{thebibliography}{}

\bibitem[\protect\citeauthoryear{Alumur and Kara}{2008}]{AlumurK08}
Alumur, S.~A., and Kara, B.~Y.
\newblock 2008.
\newblock Network hub location problems: The state of the art.
\newblock {\em European Journal of Operational Research} 190(1):1--21.

\bibitem[\protect\citeauthoryear{Courcelle, Makowsky, and
  Rotics}{2000}]{CourcelleMakowskyRotics00}
Courcelle, B.; Makowsky, J.~A.; and Rotics, U.
\newblock 2000.
\newblock Linear time solvable optimization problems on graphs of bounded
  clique-width.
\newblock {\em Theory Comput. Syst.} 33(2):125--150.

\bibitem[\protect\citeauthoryear{Cunningham and
  Geelen}{2007}]{CunninghamGeelen07}
Cunningham, W.~H., and Geelen, J.
\newblock 2007.
\newblock On integer programming and the branch-width of the constraint matrix.
\newblock In Fischetti, M., and Williamson, D.~P., eds., {\em Integer
  Programming and Combinatorial Optimization, 12th International {IPCO}
  Conference, Ithaca, NY, USA, June 25-27, 2007, Proceedings}, volume 4513 of
  {\em Lecture Notes in Computer Science},  158--166.
\newblock Springer.

\bibitem[\protect\citeauthoryear{Cygan \bgroup et al\mbox.\egroup
  }{2015}]{CyganFKLMPPS15}
Cygan, M.; Fomin, F.~V.; Kowalik, L.; Lokshtanov, D.; Marx, D.; Pilipczuk, M.;
  Pilipczuk, M.; and Saurabh, S.
\newblock 2015.
\newblock {\em Parameterized Algorithms}.
\newblock Springer.

\bibitem[\protect\citeauthoryear{{De Loera}, Hemmecke, and
  K{\"{o}}ppe}{2013}]{DeLoeraHK:AGideasBook}
{De Loera}, J.~A.; Hemmecke, R.; and K{\"{o}}ppe, M.
\newblock 2013.
\newblock {\em Algebraic and Geometric Ideas in the Theory of Discrete
  Optimization}, volume~14 of {\em {MOS-SIAM} Series on Optimization}.
\newblock {SIAM}.

\bibitem[\protect\citeauthoryear{Diestel}{2012}]{Diestel12}
Diestel, R.
\newblock 2012.
\newblock {\em Graph Theory, 4th Edition}, volume 173 of {\em Graduate texts in
  mathematics}.
\newblock Springer.

\bibitem[\protect\citeauthoryear{Downey and Fellows}{2013}]{book/DowneyF13}
Downey, R.~G., and Fellows, M.~R.
\newblock 2013.
\newblock {\em Fundamentals of Parameterized Complexity}.
\newblock Texts in Computer Science. Springer.

\bibitem[\protect\citeauthoryear{Fellows \bgroup et al\mbox.\egroup
  }{2008}]{FellowsLokshtanovMisraRS08}
Fellows, M.~R.; Lokshtanov, D.; Misra, N.; Rosamond, F.~A.; and Saurabh, S.
\newblock 2008.
\newblock Graph layout problems parameterized by vertex cover.
\newblock In {\em ISAAC}, Lecture Notes in Computer Science,  294--305.
\newblock Springer.

\bibitem[\protect\citeauthoryear{Fischer, Makowsky, and
  Ravve}{2008}]{FischerMakowskyRavve06}
Fischer, E.; Makowsky, J.~A.; and Ravve, E.~R.
\newblock 2008.
\newblock Counting truth assignments of formulas of bounded tree-width or
  clique-width.
\newblock {\em Discr. Appl. Math.} 156(4):511--529.

\bibitem[\protect\citeauthoryear{Floudas and Lin}{2005}]{FloudasLin05}
Floudas, C., and Lin, X.
\newblock 2005.
\newblock Mixed integer linear programming in process scheduling: Modeling,
  algorithms, and applications.
\newblock {\em Annals of Operations Research} 139(1):131--162.

\bibitem[\protect\citeauthoryear{Flum and Grohe}{2006}]{FlumGrohe06}
Flum, J., and Grohe, M.
\newblock 2006.
\newblock {\em Parameterized Complexity Theory}, volume XIV of {\em Texts in
  Theoretical Computer Science. An EATCS Series}.
\newblock Berlin: Springer Verlag.

\bibitem[\protect\citeauthoryear{Frank and Tardos}{1987}]{FrankTardos87}
Frank, A., and Tardos, {\'E}.
\newblock 1987.
\newblock An application of simultaneous diophantine approximation in
  combinatorial optimization.
\newblock {\em Combinatorica} 7(1):49--65.

\bibitem[\protect\citeauthoryear{Ganian and Szeider}{2015}]{GanianSzeider15}
Ganian, R., and Szeider, S.
\newblock 2015.
\newblock Community structure inspired algorithms for {SAT} and \#{SAT}.
\newblock In {\em Theory and Applications of Satisfiability Testing - {SAT}
  2015 - 18th International Conference, Austin, Texas, US, September 24-27,
  2015, Proceedings (to appear)}.
\newblock Springer.

\bibitem[\protect\citeauthoryear{Ganian, Hlin\v{e}n\'{y}, and
  Obdr\v{z}\'{a}lek}{}]{GanianHlinenyObdrzalek10}
Ganian, R.; Hlin\v{e}n\'{y}, P.; and Obdr\v{z}\'{a}lek, J.
\newblock Better algorithms for satisfiability problems for formulas of bounded
  rank-width.
\newblock {\em Fundam. Inform.} 123(1):59--76.

\bibitem[\protect\citeauthoryear{Ganian, Kim, and
  Szeider}{2015}]{GanianKimSzeider15}
Ganian, R.; Kim, E.~J.; and Szeider, S.
\newblock 2015.
\newblock Algorithmic applications of tree-cut width.
\newblock In {\em Mathematical Foundations of Computer Science 2015 - 40th
  International Symposium, {MFCS} 2015, Milan, Italy, August 24-28, 2015,
  Proceedings, Part {II}},  348--360.

\bibitem[\protect\citeauthoryear{Garey and Johnson}{1979}]{GareyJohnson79}
Garey, M.~R., and Johnson, D.~R.
\newblock 1979.
\newblock {\em Computers and Intractability}.
\newblock San Francisco: W. H. Freeman and Company, New York.

\bibitem[\protect\citeauthoryear{Gaspers and Szeider}{2012}]{GaspersSzeider12}
Gaspers, S., and Szeider, S.
\newblock 2012.
\newblock Backdoors to satisfaction.
\newblock In {\em The Multivariate Algorithmic Revolution and Beyond - Essays
  Dedicated to Michael R. Fellows on the Occasion of His 60th Birthday}, volume
  7370 of {\em Lecture Notes in Computer Science},  287--317.
\newblock Springer Verlag.

\bibitem[\protect\citeauthoryear{Gutin, Jones, and
  Wahlstr{\"{o}}m}{2015}]{GutinJonesW15}
Gutin, G.; Jones, M.; and Wahlstr{\"{o}}m, M.
\newblock 2015.
\newblock Structural parameterizations of the mixed chinese postman problem.
\newblock In {\em Algorithms - {ESA} 2015 - 23rd Annual European Symposium,
  Patras, Greece, September 14-16, 2015, Proceedings},  668--679.

\bibitem[\protect\citeauthoryear{Hemmecke, Onn, and
  Romanchuk}{2013}]{HemmeckeOR:13}
Hemmecke, R.; Onn, S.; and Romanchuk, L.
\newblock 2013.
\newblock {N}-fold integer programming in cubic time.
\newblock {\em Math. Program} 137(1-2):325--341.

\bibitem[\protect\citeauthoryear{Jansen and Kratsch}{2015}]{JansenKratsch15esa}
Jansen, B. M.~P., and Kratsch, S.
\newblock 2015.
\newblock A structural approach to kernels for {ILP}s: Treewidth and total
  unimodularity.
\newblock In {\em Algorithms - {ESA} 2015 - 23rd Annual European Symposium,
  Patras, Greece, September 14-16, 2015, Proceedings}, volume 9294 of {\em
  Lecture Notes in Computer Science},  779--791.
\newblock Springer.

\bibitem[\protect\citeauthoryear{Kannan}{1987}]{Kannan87}
Kannan, R.
\newblock 1987.
\newblock Minkowski's convex body theorem and integer programming.
\newblock {\em Math. Oper. Res.} 12(3):415--440.

\bibitem[\protect\citeauthoryear{Lenstra and Jr.}{1983}]{Lenstra83}
Lenstra, H.~W., and Jr.
\newblock 1983.
\newblock Integer programming with a fixed number of variables.
\newblock {\em Math. Oper. Res.} 8(4):538--548.

\bibitem[\protect\citeauthoryear{Lodi, Martello, and Monaci}{2002}]{LodiMM02}
Lodi, A.; Martello, S.; and Monaci, M.
\newblock 2002.
\newblock Two-dimensional packing problems: {A} survey.
\newblock {\em European Journal of Operational Research} 141(2):241--252.

\bibitem[\protect\citeauthoryear{Ne\v{s}et\v{r}il and {Ossona de
  Mendez}}{2012}]{NOdM12}
Ne\v{s}et\v{r}il, J., and {Ossona de Mendez}, P.
\newblock 2012.
\newblock {\em {Sparsity: Graphs, Structures, and Algorithms}}, volume~28 of
  {\em Algorithms and Combinatorics}.
\newblock Springer.

\bibitem[\protect\citeauthoryear{Niedermeier}{2006}]{book/Niedermeier06}
Niedermeier, R.
\newblock 2006.
\newblock {\em Invitation to Fixed-Parameter Algorithms}.
\newblock Oxford University Press.

\bibitem[\protect\citeauthoryear{Onn}{2010}]{Onn2010}
Onn, S.
\newblock 2010.
\newblock Nonlinear discrete optimization.
\newblock {\em Zurich Lectures in Advanced Mathematics, European Mathematical
  Society}.

\bibitem[\protect\citeauthoryear{Papadimitriou and
  Steiglitz}{1982}]{PapadimitriouSteiglitz82}
Papadimitriou, C.~H., and Steiglitz, K.
\newblock 1982.
\newblock {\em Combinatorial Optimization: Algorithms and Complexity}.
\newblock Prentice-Hall.

\bibitem[\protect\citeauthoryear{Papadimitriou}{1981}]{Papadimitriou81}
Papadimitriou, C.~H.
\newblock 1981.
\newblock On the complexity of integer programming.
\newblock {\em J. ACM} 28(4):765--768.

\bibitem[\protect\citeauthoryear{Szeider}{2003}]{Szeider03}
Szeider, S.
\newblock 2003.
\newblock Finding paths in graphs avoiding forbidden transitions.
\newblock {\em Discr. Appl. Math.} 126(2-3):239--251.

\bibitem[\protect\citeauthoryear{Toth and Vigo}{2001}]{Toth01}
Toth, P., and Vigo, D., eds.
\newblock 2001.
\newblock {\em The Vehicle Routing Problem}.
\newblock Philadelphia, PA, USA: Society for Industrial and Applied
  Mathematics.

\bibitem[\protect\citeauthoryear{van~den Briel, Vossen, and
  Kambhampati}{2005}]{BrielVossenKambhampati05}
van~den Briel, M.; Vossen, T.; and Kambhampati, S.
\newblock 2005.
\newblock Reviving integer programming approaches for {AI} planning: {A}
  branch-and-cut framework.
\newblock In {\em Proceedings of the Fifteenth International Conference on
  Automated Planning and Scheduling {(ICAPS} 2005), June 5-10 2005, Monterey,
  California, {USA}},  310--319.
\newblock {AAAI}.

\bibitem[\protect\citeauthoryear{Vossen \bgroup et al\mbox.\egroup
  }{1999}]{VossenBallLotemNau99}
Vossen, T.; Ball, M.~O.; Lotem, A.; and Nau, D.~S.
\newblock 1999.
\newblock On the use of integer programming models in {AI} planning.
\newblock In {\em Proceedings of the Sixteenth International Joint Conference
  on Artificial Intelligence, {IJCAI} 99, Stockholm, Sweden, July 31 - August
  6, 1999. 2 Volumes, 1450 pages},  304--309.
\newblock Morgan Kaufmann.

\end{thebibliography}

\end{document}